\newtheorem{theorem}{Theorem}
\newtheorem{lemma}{Lemma}
\newtheorem{corollary}{Corollary}
\newtheorem{observation}{Observation}
\newlength\myindent
\newcounter{cases}
\newcounter{subcases}
\newenvironment{mycases}
  {%
    \setcounter{cases}{0}%
    \def\case
      {%
        \par\noindent
        \refstepcounter{cases}%
        \textbf{Case \thecases.}
      }%
  }
  {%
    \par
  }
\renewcommand*\thecases{\arabic{cases}}
\journal{DAM}
\begin{document}

\begin{frontmatter}


\title{The $k$-Colorable Unit Disk Cover Problem}


\author{Monith S. Reyunuru}

\address{Amazon Development Center, Hyderabad, India}

\author{Kriti Jethlia}
\author{Manjanna Basappa\corref{mycorrespondingauthor}}

\address{Birla Institute of Technology \& Science Pilani, Hyderabad Campus, India}

\cortext[mycorrespondingauthor]{Corresponding author}
\ead{manjanna@hyderabad.bits-pilani.ac.in}

\begin{abstract}
In this article, we consider colorable variations of the Unit Disk Cover ({\it UDC}) problem as follows.

{\it $k$-Colorable Discrete Unit Disk Cover ({\it $k$-CDUDC})}: Given a set $P$
of $n$ points, and a set $D$ of $m$ unit disks (of radius=1), both lying in the plane, and a parameter $k$, the
objective is to compute a set $D'\subseteq D$ such that every point in $P$ is covered by at least one disk in $D'$ and there exists a function $\chi:D'\rightarrow C$ that assigns colors to disks in $D'$ such that for any $d$ and $d'$ in $D'$ if $d\cap d'\neq\emptyset$, then $\chi(d)\neq\chi(d')$, where $C$ denotes a set containing $k$ distinct colors.

For the {\it $k$-CDUDC} problem, our proposed algorithms approximate the number of colors used in the coloring if there exists a $k$-colorable cover. We first propose a 4-approximation algorithm in $O(m^{7k}n\log k)$
time for this problem and then show that the running time can be improved by a multiplicative factor of $m^k$, where a positive integer $k$ denotes the cardinality of a color-set. The previous best known result for the problem when $k=3$ is due to the recent work of Biedl et al., (2021)\cite{BBL19}, who proposed a 2-approximation algorithm in $O(m^{25}n)$ time. For $k=3$, our algorithm runs in $O(m^{18}n)$ time, faster than the previous best algorithm, but gives a 4-approximate result. We then generalize our approach to yield a family of $\rho$-approximation algorithms in $O(m^{\alpha k}n\log k)$ time, where $(\rho,\alpha)\in \{(4, 7), (6,5), (7, 5), (9,4)\}$. We further generalize this to exhibit a $O(\frac{1}{\tau})$-approximation algorithm in $O(m^{\alpha k}n\log k)$ time for a given $1 \leq \tau \leq 2$, where $\alpha=O(\tau^2)$. We also extend our algorithm to solve the {\it $k$-Colorable Line Segment Disk Cover ({\it $k$-CLSDC})} and {\it $k$-Colorable Rectangular Region Cover ({\it $k$-CRRC})} problems, in which instead of the set $P$ of $n$ points, we are given a set $S$ of $n$ line segments, and a rectangular region $\cal R$, respectively.

\end{abstract}

\begin{keyword}
Colorable Unit Disk Cover \sep Approximation Algorithm \sep Grid-Partitioning \sep Packing-Constraints.


\end{keyword}

\end{frontmatter}



\section{Introduction}
Our motivation for studying the problem arises from practical applications in the frequency/channel assignment problem in wireless/cellular networks. In ad-hoc mobile networks, each host(station/tower) is equipped with a Radio-Frequency (RF) transceiver to provide reliable transmission inside a circular range, represented by a disk, within some distance. Each wireless client is equipped with corresponding receivers. The clients themselves are represented by a set of points $P$ in a plane. The disks representing the range (which is presumably the same for all stations) of each potential host is represented by the set $D$. In the spirit of reducing interference in broadcast and other energy-saving measures, we aim to limit or reduce the number of different frequencies(channels) assigned to each, represented by coloring. Typically, (Wi-Fi) networks are built with 3 independent channels \cite{BHLL10}, hence the motivation for a study on the {\it 3-CDUDC} problem. In the same spirit, we generalize the {\it 3-CDUDC} to the {\it $k$-CDUDC} problem, where $k>0$ is an integer. We further generalize the problem by considering line segments and a continuous rectangular region as representing potential wireless clients (resp. the {\it $k$-CLSDC} and {\it $k$-CRRC} problems), instead of points.

\subsection{Related Work}

The {\it 3-CDUDC} problem, to the best of our knowledge, was first studied by Biedl et al., \cite{BBL19}. They gave a 2-approximation algorithm in $O(nm^{25})$ time for the {\it 3-CDUDC} problem. Their approach first partitions the plane into horizontal strips, solves the problem for every strip optimally, then returns the union of solutions of all strips. To solve the problem for any strip they show that at most a constant number of disks of an optimal solution intersect any vertical line. Based on this, they define a directed acyclic graph such that there exists a path from source to a destination corresponding to this optimal solution. In this paper, we attempt to improve upon this impractical $O(nm^{25})$ running time. Our approach, however, focuses on the specific geometric properties that arise from the dual conditionals of the problem statement. Although both of the approaches, initially, begin by dividing the plane, we recognize a unique bound that exists in our need to bound the colorability and provide a novel solution in the same regard. 

A notion of {\it conflict-free coloring (CF-coloring)} was introduced by Even et al., \cite{ELRS03}.
and Smorodinsky \cite{S03}.
In the {\it CF-coloring} problem we are given a set of points (representing client locations) and a set of base stations, the objective is to assign colors (representing frequencies) to the base stations such that any client lying within the range of at least one base station is covered by the base station whose color is different from the colors of the other base stations covering the client, and the number of colors used should be as minimum as possible. Here, the range of base stations is modeled as regions e.g., disks or other geometric objects. Even et al., \cite{ELRS03} proved that $O(\log n)$ colors are always sufficient to {\it CF-color} a set of disks in the plane, and in the worst case, $\Omega(\log n)$ colors are required. Note that this {\it CF-coloring} of disks is different from our notion of $k$-colorable disk cover of points. In the former overlapping disks may be given the same color if they dont share a client, whereas in the {\it $k$-CDUDC} overlapping disks must be colored with distinct colors regardless of whether they cover a common client. A generalization of {\it CF-coloring} is called a {\it $k$-fault-tolerant CF-coloring}. Cheilaris et al., \cite{CGRS14} presented a polynomial-time $(5-\frac{2}{k})$-approximation algorithm for the {\it $k$-fault-tolerant CF-coloring} in 1-dimensional space. Horev et al., \cite{HKS10} proved that $O(k\log n)$ colors are sufficient for any set of $n$ disks in the plane.  For {\it dynamic CF-coloring} and results on {\it CF-coloring} of other geometric objects, we refer to \cite{BM19} and references therein. 

A related problem of the {\it $k$-CDUDC} problem in the literature is the {\it Discrete Unit Disk Cover} ({\it DUDC}) problem. In the {\it DUDC} problem, we are given a set $P$ of $n$ points and a set $D$ of $m$ unit disks, our goal is to select as the smallest number of disks from $D$ as possible such that the union of these selected disks covers all points in $P$. As in the {\it $k$-CDUDC}, here also, the sets $P$ and $D$ can be considered as representing a set of wireless clients and a set of base stations or towers, respectively. The {\it DUDC} problem is \texttt{NP}-hard and is a very well studied one. There is a polynomial time approximation scheme (PTAS) with impractical running time for this problem \cite{MR10}. The current best approximation algorithm with reasonable running time is $(9+\epsilon)$ for any $\epsilon >0$ \cite{BAD15}. However, a series of approximation algorithms have been proposed for this problem by various authors over the past two decades, and a complete survey on this can be found in \cite{FL12}. When a line segment is used to represent a potential wireless client, the {\it DUDC} problem becomes a {\it Line Segment Disk Cover} ({\it LSDC}) problem. In a similar line, there is another variant of the {\it DUDC} problem, a {\it Rectangular Region Cover (RRC)} problem, in which all the continuous set of points lying in a rectangular region represent wireless clients.  All the available results for the {\it DUDC} problem also extend to the {\it LSDC} and {\it RRC} problems \cite{B18}, with slightly different running time.
We also extend our results for the {\it $k$-CDUDC} problem to solve the colorable variants of the {\it LSDC} and {\it RRC} problems, namely, the {\it $k$-CLSDC} and {\it $k$-CRRC} problems.  

\section{$k$-CDUDC Problem}\label{section-2.2}

In this section we consider the following problem.
\begin{itemize}
 \item[] {\it $k$-Colorable Discrete Unit Disk Cover ({\it $k$-CDUDC})}: Given a set $P$
of $n$ points, and a set $D$ of $m$ unit disks (of radius=1), both lying in the plane, and a parameter $k$, the
objective is to compute a set $D'\subseteq D$ that covers all points in $P$ such that the set $D'$ can be partitioned into $\{ D_1', D_2', \ldots,
D_k'\}$, where for each $a \in \{1, 2, \ldots, k\}$ the disks in $D_a'$ are pairwise disjoint, i.e., the disks in $D'$ can be colored with at most $k$ colors such that
the overlapping disks receive distinct colors and every point in $P$ is covered by a disk in $D'$.
\end{itemize}

As it was pointed out in \cite{BBL19} that there is a related problem, namely, {\it Unit Disk Chromatic Number (UDCN)} problem, that aims to color all nodes in a given unit disk graph with at most $k$ colors. The {\it UDCN} problem is \texttt{NP}-hard for any $k\geq 3$ \cite{CCJ90}. Similar to Biedl et al. \cite{BBL19}, we can center a set $D$ of $m$ unit disks in the plane such that there are at least $k+1$ pairwise non-disjoint disks that have a common intersection region and a unit disk graph $G_{D}=(V_{D}, E_{D})$ induced by $D$ is connected. Let us then place a set $P$ of $n$ points in this intersection region. Now observe that the set $P$ has a cover which is at most $k$-colorable, whereas the graph $G_{D}$ is at least $(k+1)$-colorable. Hence, the {\it $k$-CDUDC} problem is different from the {\it UDCN} problem. Biedl et al. \cite{BBL19} showed that the {\it 3-CDUDC} problem is \texttt{NP}-hard by carefully incorporating a set $P$ of $n$ points in the \texttt{NP}-hard proof of the {\it UDCN} problem with $k=3$ in \cite{CCJ90}. This directly implies that the $k$-{\it CDUDC} is \texttt{NP}-hard since the $k$-{\it CDUDC} is a generalization of {\it 3-CDUDC}. It is also easy to see that the $k$-{\it CDUDC} problem belongs to the class \texttt{NP}, as follows: Here, the certificate for any Yes instance of $k$-{\it CDUDC} is a set of $k$ distinct colors identified by non-negative integers $1,2, \ldots,k$, and a mapping $\chi:D'\rightarrow \{1,2, \ldots,k\}$, where $D'\subseteq D$. A polynomial time verifier checks if every point in $P$ is covered by a disk in $D'$ and for every pair of disks $d,d'\in D'$ if $d\cap d'\neq \emptyset$, whether it is the case that $\chi(d)\neq \chi(d')$.

\subsection{4-Approximate Algorithm}\label{section2.1}

Here, our algorithm is based on partitioning the plane containing points into a grid and then determining bound on the number of unit disks that can participate in any $k$-colorable covering of points lying within any square of the grid. We first define a grid of width $\tau$ units that partitions the plane into squared regions. Each of these squared regions is a grid cell with a size $\tau\times\tau$. For simplicity assume no point of $P$ lies on the boundary of these grid cells. Let us associate a unique ID $id_{\cal C}$ to each grid cell ${\cal C}$ as follows; let $p=(x_p, y_p)$ be a point in ${\cal C}$ and $\tau$ be the grid width, then $id_{\cal C}=(\lfloor\frac{x_p}{\tau}\rfloor, \lfloor\frac{y_p}{\tau}\rfloor)$, (see Fig. \ref{fig-2}). Note that each grid cell has a unique ID associated with it but multiple points can be associated with the same ID (if they lie within the corresponding grid cell). Let $id_{{\cal C}_1}$ and $id_{{\cal C}_2}$ be any two arbitrary grid cells with base points $(x_1,y_1)$ and $(x_2,y_2)$ respectively. We define the greater than operator for an ID as follows: $(id_{{\cal C}_1} > id_{{\cal C}_2}) \iff ((x_1 > x_2) \land (y_1 > y_2)) \lor  ((x_1 = x_2) \land (y_1 > y_2)) \lor ((x_1 > x_2) \land (y_1 = y_2))$. Note that our defintion of $id_{\cal C}$ implies that the grid cells are indexed from bottom-left to top-right, this operator simply indicates the order of iteration that is followed by our algorithm. We move left to right row-wise starting from the bottom-left corner cell. A pre-defined order is essential to our handing-over logic at line 11 of Algorithm \ref{4.alg:SSC}.

\begin{figure}[h]
\centering
\includegraphics[scale=0.15]{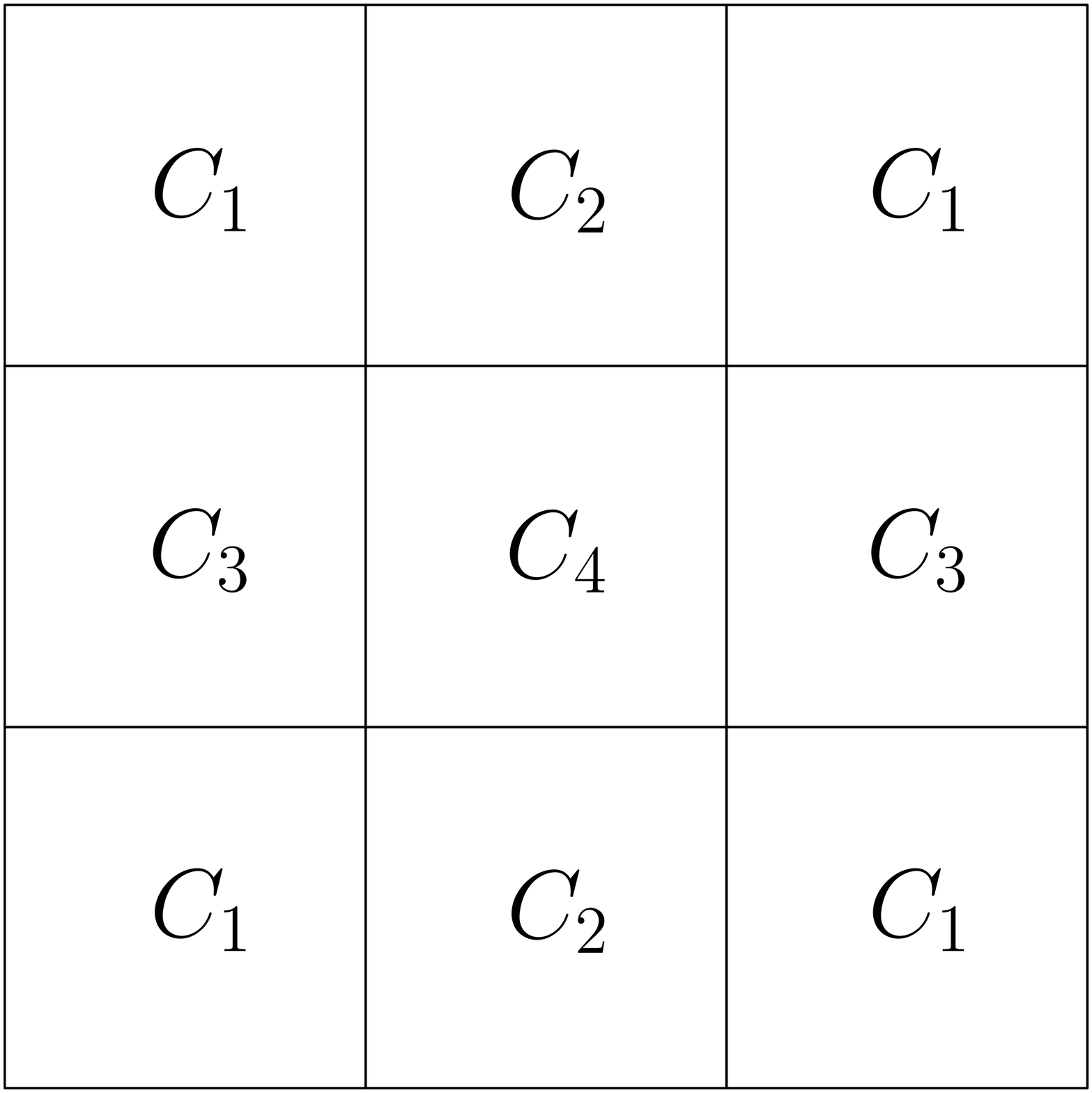}
\begin{picture}(0,0)

  \put(  -112, 2){$[0,0]$}
    \put(  -112, 39){$[0,1]$}
 \put(  -112, 76){$[0,2]$}
   \put(  -76, 2){$[1,0]$}
    \put(  -76, 39){$[1,1]$}
 \put(  -76, 76){$[1,2]$}
   \put(  -39, 2){$[2,0]$}
    \put(  -39, 39){$[2,1]$}
 \put(  -39, 76){$[2,2]$}
\end{picture}
\caption {Assignment of unique ID's $id_{\cal C}$ and color sets for a grid ${\cal G}$ with $\tau=2$}
\label{fig-2}
\end{figure}

Given a grid cell ${\cal C}'$, the following lemma provides a bound on the cardinality of the $k$-colorable unit disks covering all points lying within the grid cell ${\cal C}'$. Let $D_{{\cal C}'}\subseteq D$ be the set of $k$-colorable unit disks covering all points of $P$ lying within the grid cell ${\cal C}'$. The proof of the lemma is based on the observation that determining this bound is the same as determining a maximum number of disjoint unit disks that could potentially intersect ${\cal C}'$.

\begin{observation}\label{observation1.1}
 If ${\cal C}'$ is a grid cell of size $\tau\times \tau$, then the maximum number of pairwise disjoint unit disks that could potentially intersect ${\cal C}'$ is at most $2\tau+2+(\frac{\tau}{2})^2$ if $\tau$ is even, and is atmost $4\times\lceil\frac{\tau}{2}\rceil+4+(\frac{\tau}{2})^2$ if $\tau$ is odd.
\end{observation}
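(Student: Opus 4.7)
My plan is to reduce the statement to a $2$-separated point-packing problem in the Minkowski expansion of $\mathcal{C}'$, and then to prove the bound via a column-based decomposition guided by the shape of the extremal packing.

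Two interior-disjoint unit disks have centers at Euclidean distance at least $2$, since otherwise their interiors would meet. Moreover, a unit disk intersects $\mathcal{C}'$ iff its center lies in the Minkowski expansion $\mathcal{E} := \mathcal{C}' \oplus B(0,1)$, which is the $\tau \times \tau$ cell padded by a strip of width $1$ on each side with four quarter-disks of radius $1$ rounding the corners. Counting pairwise disjoint unit disks that intersect $\mathcal{C}'$ is therefore equivalent to bounding the cardinality of any $2$-separated point set contained in $\mathcal{E}$.

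For the even-$\tau$ case I would first exhibit the extremal packing and use its structure to drive the upper bound. The extremal configuration places centers in $\tau/2 + 2$ vertical columns at $x$-coordinates $-1, 1, 3, \dots, \tau + 1$: the two outer columns carry the $y$-values $\{0, 2, \dots, \tau\}$ (that is, $\tau/2 + 1$ points each, clipped by the rounded corners of $\mathcal{E}$), while each of the $\tau/2$ inner columns carries $y$-values $\{-1, 1, \dots, \tau + 1\}$ ($\tau/2 + 2$ points each). A direct check confirms that every pairwise distance in this set is at least $2$ and that every point lies in $\mathcal{E}$, and the total is $2(\tau/2 + 1) + (\tau/2)(\tau/2 + 2) = (\tau/2)^2 + 2\tau + 2$. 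For the matching upper bound I would overlay $\mathcal{E}$ with the column-slabs $[2k - 1, 2k + 1]$ of width $2$, assign every center in a $2$-separated set uniquely to its enclosing slab using a half-open convention on slab boundaries, and verify column by column that each interior slab hosts at most $\tau/2 + 2$ centers and each outer slab at most $\tau/2 + 1$ centers (the clipping by the rounded corners of $\mathcal{E}$ accounts for the difference). The odd-$\tau$ case follows the same scheme with one outer slab of reduced horizontal width, producing the $4\lceil \tau/2 \rceil + 4$ correction term in place of $2\tau + 2$.

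I expect the main obstacle to be the per-slab upper bound. An arbitrary $2$-separated subset of a slab of width exactly $2$ need not align with the two vertical slab edges; hexagonal or off-grid zig-zag configurations can a priori carry more points than a single vertical chain. The delicate step is therefore to prove that any $2$-separated subset of a $2 \times (\tau + 2)$ slab (or its rounded-corner-clipped boundary variant) has cardinality at most the claimed cap, which amounts to a short planar-geometry case analysis showing that two centers in the same slab force either a vertical offset of at least $2$ or a simultaneous horizontal offset of $2$ placing them on the two slab edges. Once this per-slab lemma is in hand the final summation, the treatment of the four corner quarter-disks of $\mathcal{E}$, and the odd-$\tau$ modification are all routine.
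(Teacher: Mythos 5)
Your reduction of the statement to bounding a $2$-separated point set in the Minkowski expansion $\mathcal{E}=\mathcal{C}'\oplus B(0,1)$ is clean and correct, and your explicit lattice configuration (outer columns clipped by the rounded corners, inner columns of $\tau/2+2$ points) is a genuine, verifiable witness that the bound $(\tau/2)^2+2\tau+2$ is attained for even $\tau$ --- this is in fact more explicit than what the paper does. The paper's own argument proceeds quite differently: it enumerates symmetric disk arrangements (symmetric about a vertical axis, then about a diagonal), separately counts disks meeting the boundary of $\mathcal{C}'$ ($2\tau+2$) and disks interior to it ($(\tau/2)^2$), and asserts maximality of these patterns by appeal to symmetry, ``induction,'' and trial and error; it never passes to the Minkowski sum or to a separated-point formulation.

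The genuine gap in your proposal is the per-slab upper bound, and it is not merely ``delicate'' --- as stated it is false. A width-$2$ slab is \emph{not} limited to a single vertical chain of $2$-separated points: two points at horizontal offset $2-\epsilon$ (for $\epsilon\le 2-\sqrt{3}$) and vertical offset $1$ are still at distance $\ge 2$, so a slab can carry a two-column zig-zag with points near \emph{both} of its vertical edges. Concretely, for $\tau=2$ the half-open outer slab $[-1,1)$ admits the $2$-separated set $\{(-1,0),(-1,2),(1-\epsilon,-1),(1-\epsilon,1),(1-\epsilon,3)\}\subset\mathcal{E}$ of size $5$, against your claimed cap of $\tau/2+1=2$. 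Your proposed resolution (``a simultaneous horizontal offset of $2$ placing them on the two slab edges'') does not exclude exactly these configurations. The doubling cannot occur in every slab simultaneously --- a crowded right edge of one slab starves the left edge of the next --- but exploiting that requires a global amortized argument across adjacent slabs (or a different decomposition, e.g.\ charging boundary columns fractionally to both neighbours), which is the actual content of the upper bound and is missing. To be fair, the paper's own proof does not supply a rigorous upper-bound argument either; but the summation you describe does not go through with the per-slab caps you state, so the even case is not proved, and the odd case (where the target $4\lceil\tau/2\rceil+4+(\tau/2)^2$ is looser and non-integral) inherits the same problem.
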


\begin{proof}
 We will provide an upper bound to the number of pairwise disjoint unit disks that can cover a square ${\cal C}'$ of side length $\tau$. Let us prove this by considering the two cases: $\tau$ being even and odd. 

Since we aim at bringing an upper bound to the number of disks that have a common intersection point with ${\cal C}'$, we divde the region of ${\cal C}'$ into two parts; the inner part of the square and the union of its outer edges on which these common intersection points can lie. To maximize the number of disks, it is intuitive to keep them as far as possible to increase the spacing between disks and thereby trying to increase the number of disks.

 \noindent{\bf ${\cal C}'$ with even side length:} When $\tau$ is a multiple of 2, it is quite intutive that a symmetric pattern is likely to give the best results. So we attempt two types of symmetric pattern.

\textbf{Case 1:} Considering the square ${\cal C}'$ to be symmetric along the vertical axis, we arrange the disks in two possible cases: either a disk is arranged with edge of ${\cal C}'$ as tangent such that the center 
of ${\cal C}'$ lies vertically above/below the disk (see Fig. \ref{bound1a}), or the vertical partition is tangent to some of the disks (see Fig. \ref{bound1b}). In the first case the maximum number of disks along horizontal part of outer edge would be $2\times\frac{\tau}{2}$, since the diameter of disk is 2. In the second case it can be shown that the maximum number of disks along the horizontal part would be $2\times(\frac{\tau}{2} + 1)$.

\begin{figure}
\centering
\subfloat[Center of disk along the vertical axis\label{bound1a}]{
\includegraphics[width=.45\textwidth]{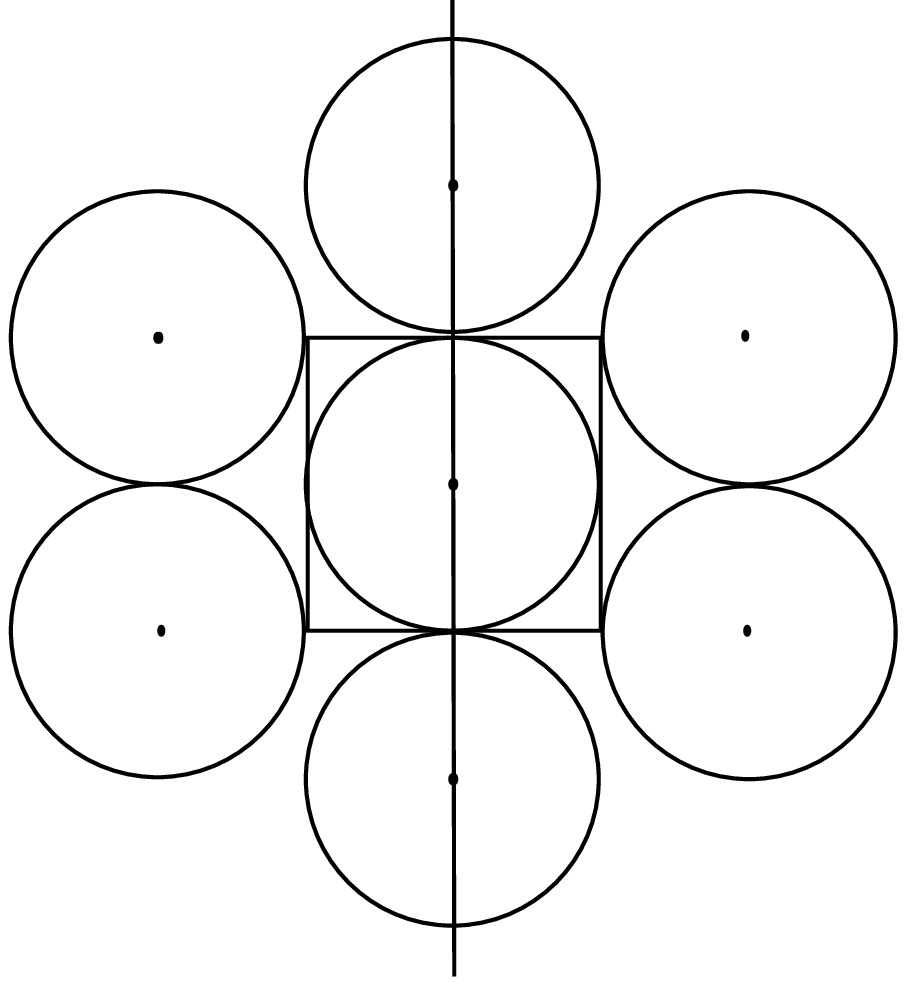}
}
\subfloat[Vertical line dividing the square as tangent\label{bound1b}]{
\includegraphics[width=.45\textwidth]{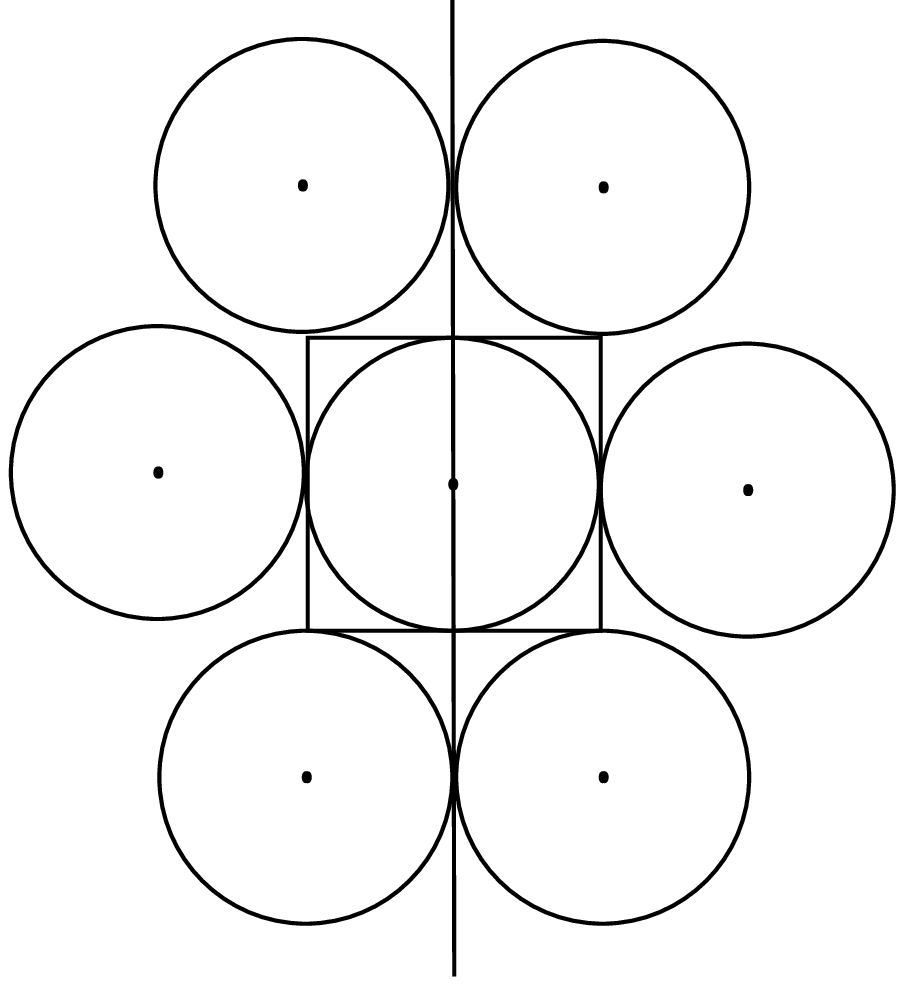}
}
\caption{Possible symmetric arrangement along vertical axis for even values of $\tau$.}
\label{1.observ}
\end{figure}

However, as we see for the case $\tau$=2 (shown in Fig \ref{bound1a} and Fig \ref{bound1b}), both the cases break down to the same case since if for a given pair of parallel edges of ${\cal C}'$ if one of the above case is true, then for the other pair the other case stands true. By induction we can prove that this stands true for all the even values of $\tau > 2$. Through this we get an upper limit in the number of disks along the edges. For the inner area of ${\cal C}'$ the maximum number of disjoint disks are $(\frac{\tau}{2})^2$.
So the maximum number of disjoint disks are ($\tau$/2)$^2$ +2 $\times$ $\tau$/2 +2 $\times$ ($\tau$/2 + 1), which is 2$\tau$+2+($\tau$/2)$^2$.
 

\textbf{Case 2:} Considering ${\cal C}'$ to be diagonally symmetric, here we consider one of its diagonals as the symmetry line and place the disk such that their center is on the diagonal and intersects the square at its corner. As the disk covers a part of the edge of the square, say $\delta$  where $\delta < 1$ (i.e., radius of disk) and  $\delta > 0$, apart from the case where the disks are arranged at the diagonals again (see Fig. \ref{bound1c}), the maximum number of disks on an edge is still equivalent to $\frac{\tau}{2}$. The number of disks within the interiors of ${\cal C}'$ as stated above in the previous case will remain same. Therefore, the total number of maximum possible pairwise disjoint disks are $2+4{\times}(\frac{\tau}{2})+(\frac{\tau}{2})^2$, which is again equivalent to $2\tau+2+(\frac{\tau}{2})^2$ (see Fig. \ref{bound1c}, Fig \ref{bound1d}, Fig \ref{bound1e}, Fig \ref{bound1f}, and Fig \ref{bound1g} for illustration of the case ${\tau} = 2$).

\begin{figure}
\centering
\subfloat[Disks arranged along one of the diagonal\label{bound1c}]{
\includegraphics[width=.45\textwidth]{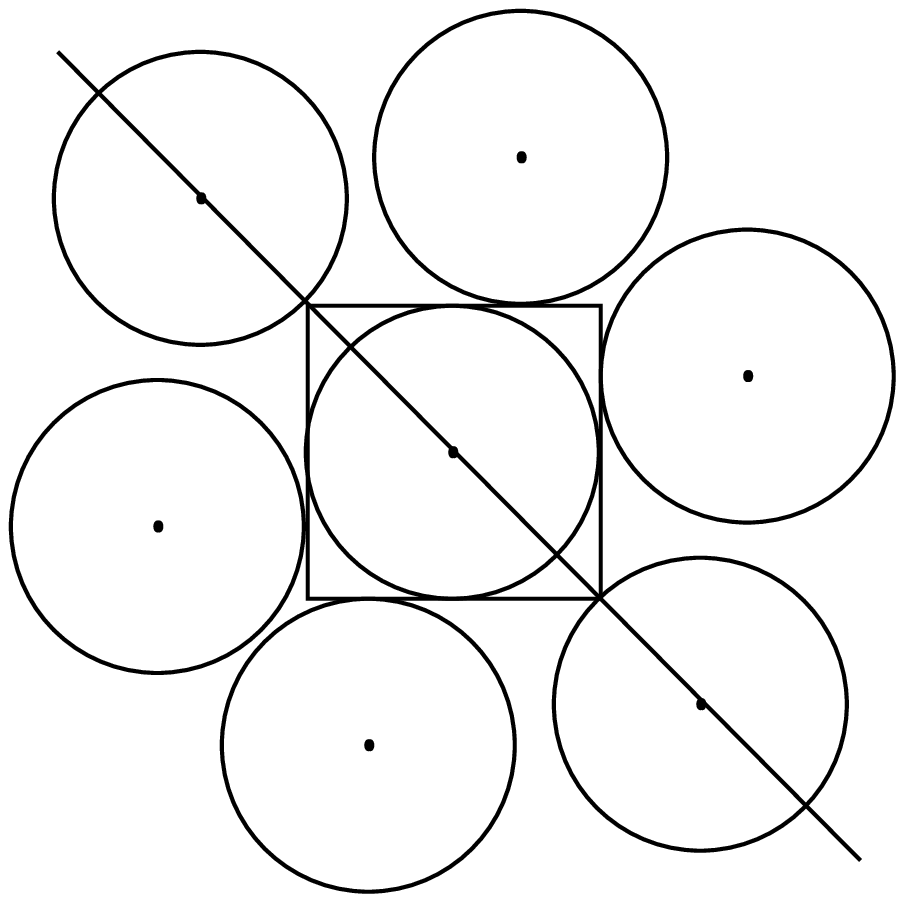}
}
\subfloat[Disks arranged are along both the diagonals\label{bound1d}]{
\includegraphics[width=.45\textwidth]{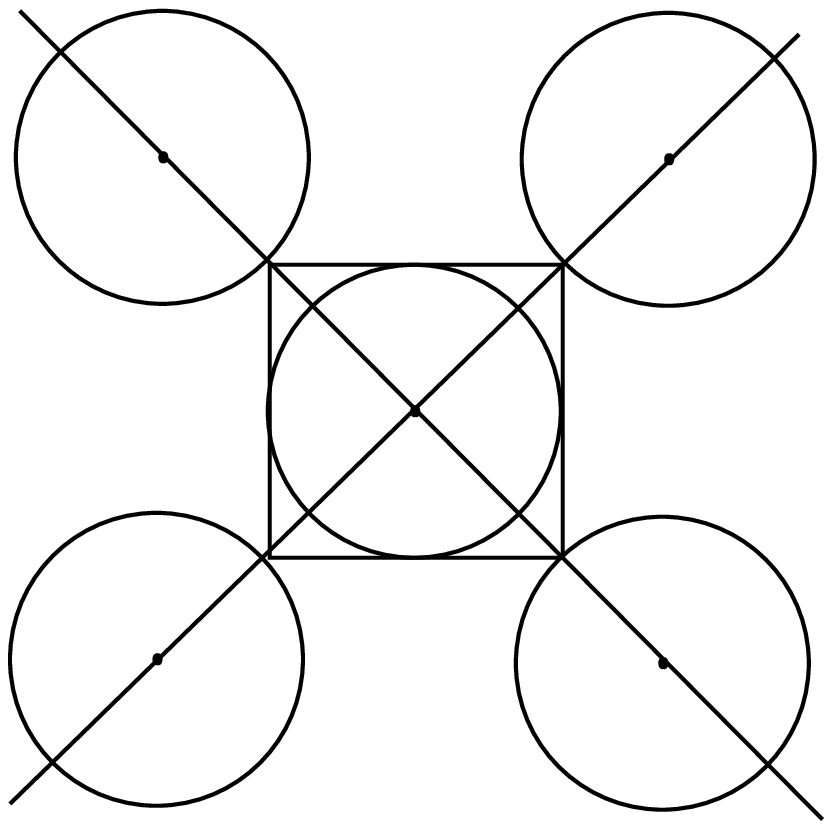}
}
\caption{Possible symmetric arrangement along diagonal for even values of $\tau$.}
\label{1.observb}
\end{figure}

 \begin{figure}
\centering
\subfloat[Complete coverage as much closely as possible\label{bound1e}]{
\includegraphics[width=.45\textwidth]{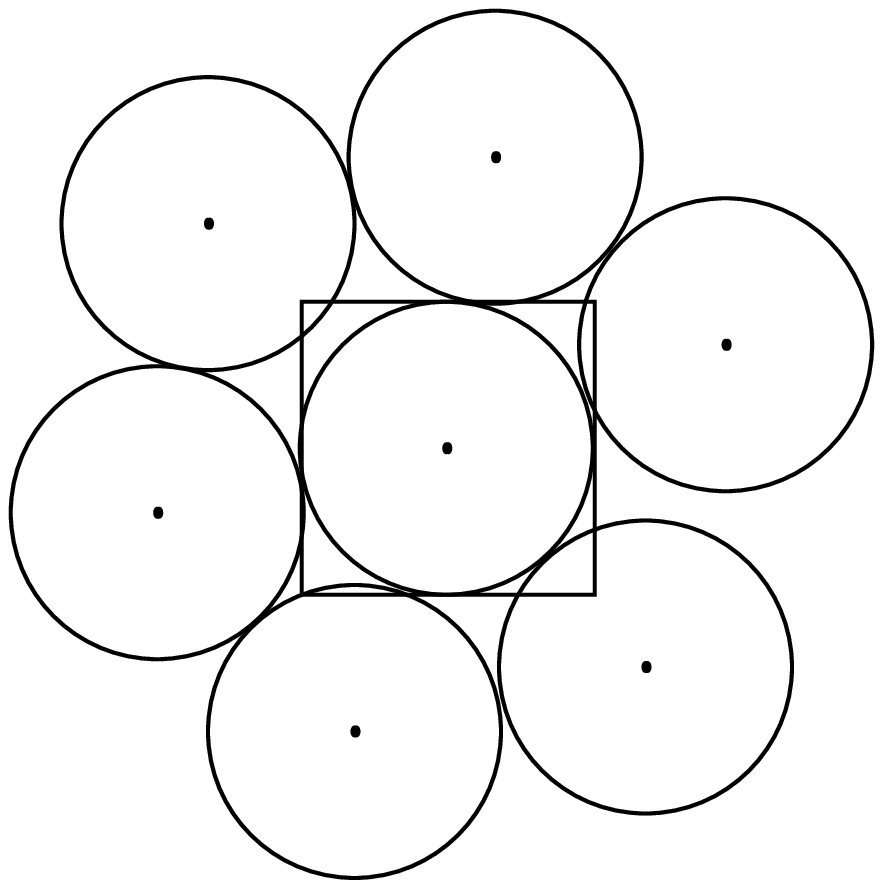}
}
\subfloat[Symmetric coverage\label{bound1f}]{
\includegraphics[width=.45\textwidth]{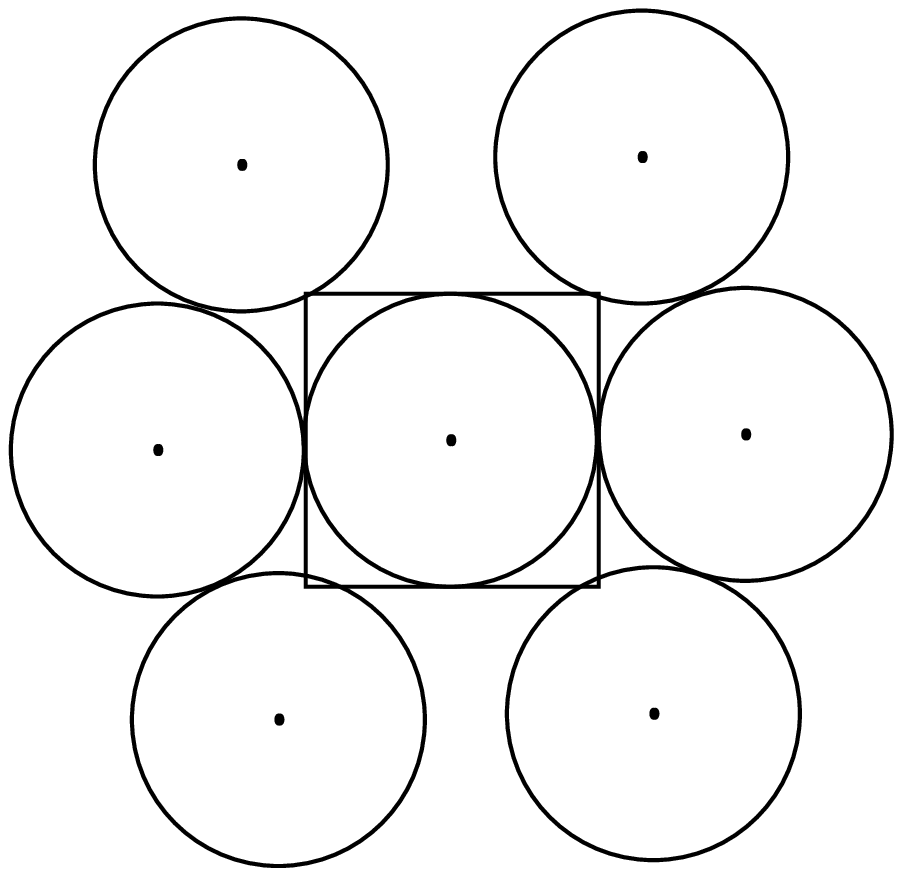}
}

\subfloat[Asymmetric coverage\label{bound1g}]{
\includegraphics[width=.45\textwidth]{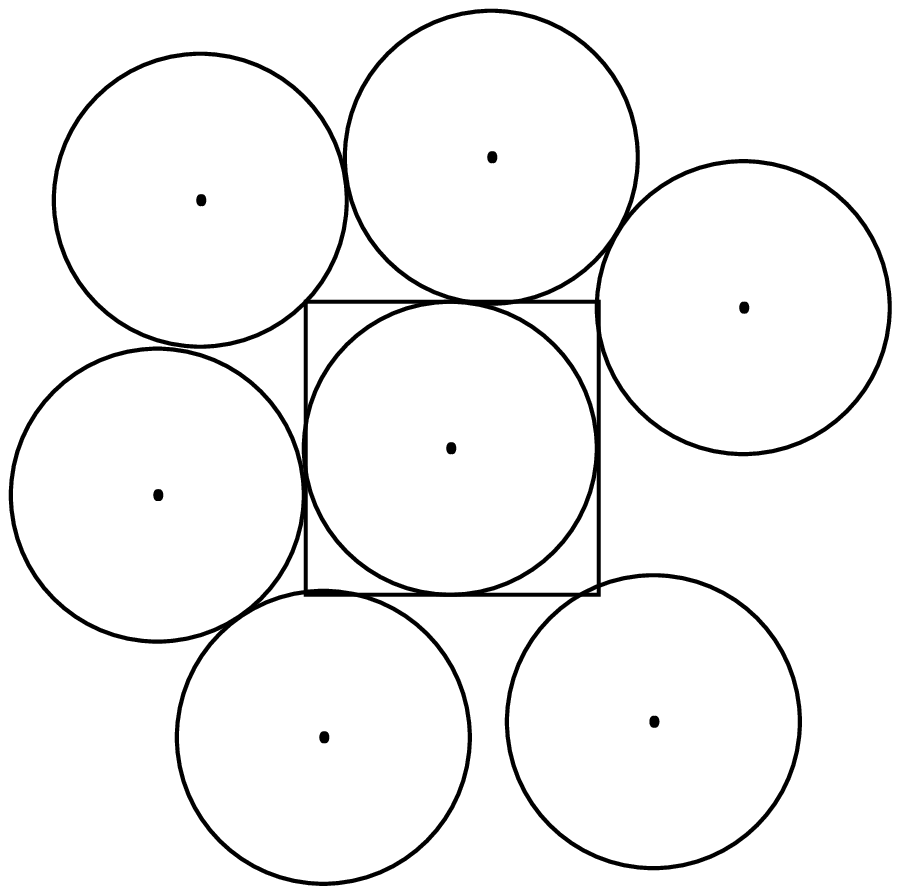}
}
\caption{Various possibilities by trial and error for even values of $\tau$.}
\label{1.observb}
\end{figure}

 \noindent{\bf ${\cal C}'$ with odd side length:} When {$\tau$} is not a multiple of 2, again it is quite intutive that a symmetric pattern is likely to give the best results. So we attempt two types of symmetric pattern.
 
 \textbf{Case 1:} Like the case for even values of $\tau$ we consider the symmetric distribution along the horizontal and vertical axes. Again we have two possibilities either the center of disk lying along the axes or symmetric about the axes for both the pairs of edges. It can be shown that for both the pair of edges we would have only one amongst the two configurations at a time for getting the miximum number of disks. In the first case, $\lceil \frac{\tau}{2}\rceil$ disks can completely be accomodated on one edge and one disk as a common disk between two adjacent edges. So, there will be $4\times \lceil \frac{\tau}{2}\rceil+4$ disks in the exterior part for this case. The number of disks in the interior part as in the even case would be $(\frac{\tau}{2})^2$. So, the total becomes $4\times\lceil\frac{\tau}{2}\rceil+4+(\frac{\tau}{2})^2$ (see Fig. \ref{bound2a} and Fig. \ref{bound2b}).
 
 \textbf{Case 2:} When diagonally symmetric, the case is quite similar to the previous case with 4 disks at the corners and $\lceil\frac{\tau}{2}\rceil$ among the four edges of ${\cal C}'$. The total again is calculating to $4\times\lceil\frac{\tau}{2}\rceil+4+(\frac{\tau}{2})^2$. 

 \begin{figure}
\centering
\subfloat[Center of disks along the axes\label{bound2a}]{
\includegraphics[width=.45\textwidth]{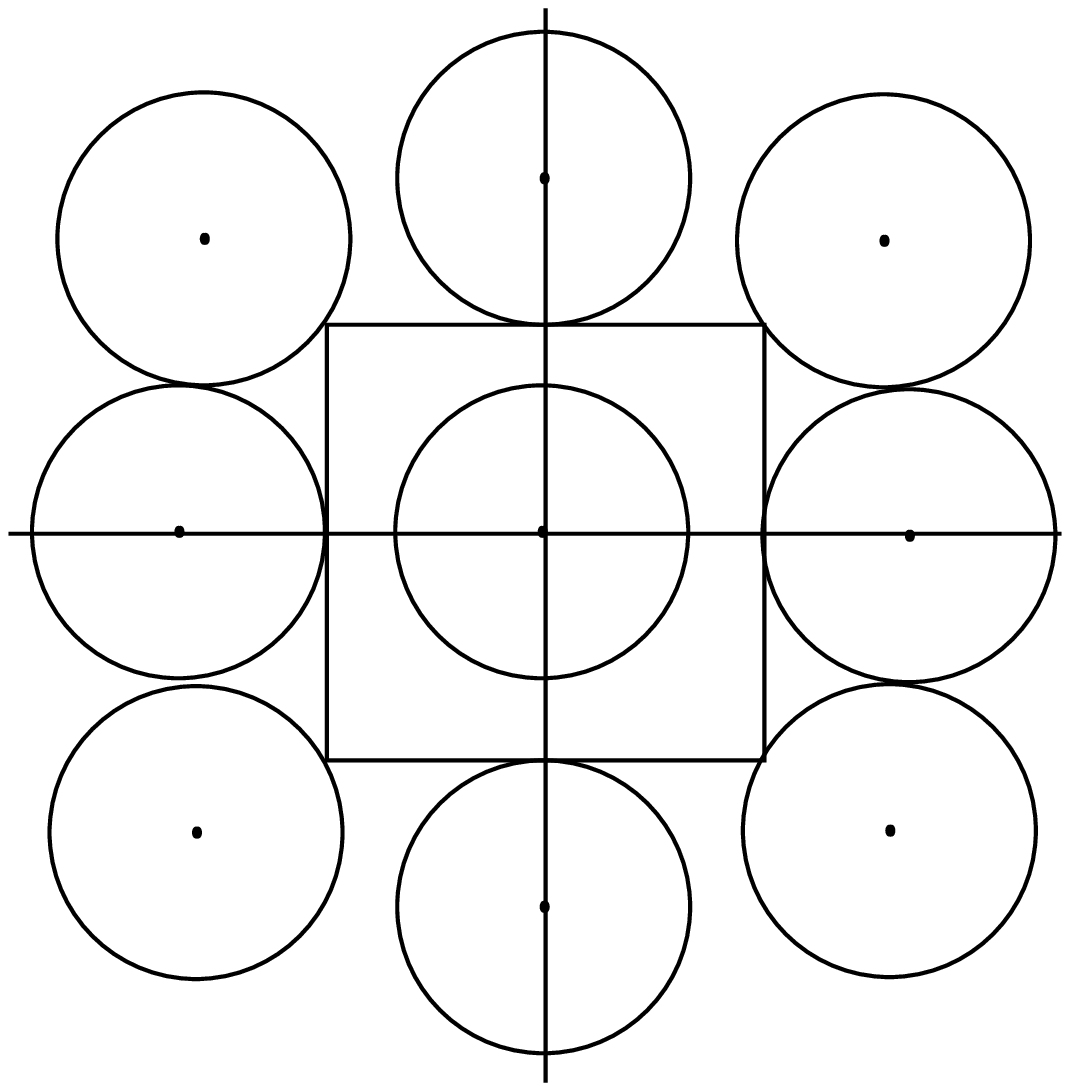}
}
\subfloat[Axes as tangent to the disks\label{bound2b}]{
\includegraphics[width=.45\textwidth]{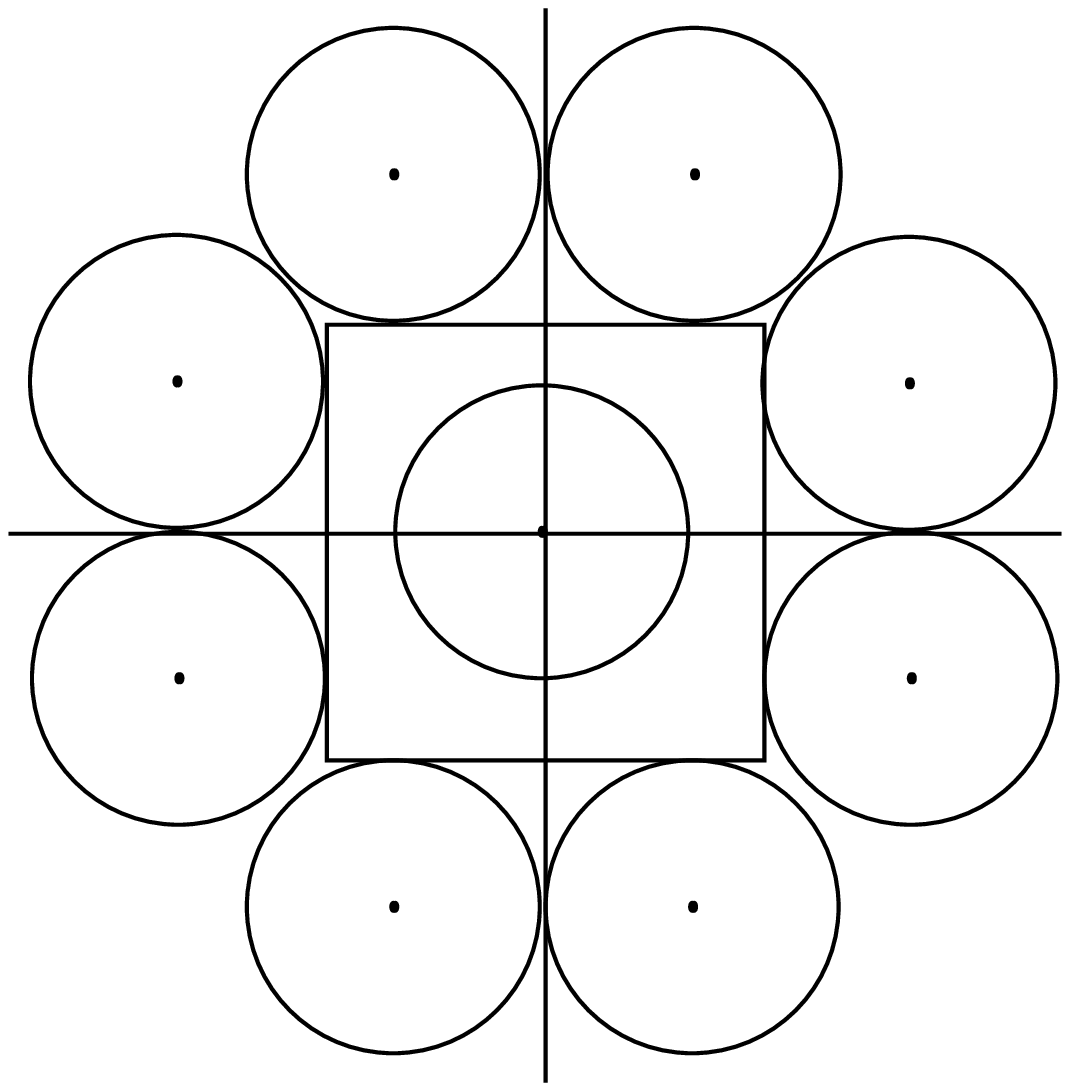}
}

\caption{Possible symmetric arrangement for odd values of $\tau$.}
\label{1.observb}
\end{figure}

 
\end{proof}

\begin{lemma} \label{lemma-1.1}
If ${\cal C}'$ is a grid cell of size $2\times 2$ and $D_{{\cal C}'}\subseteq D$ is a $k$-colorable solution for $P\cap {\cal C}'$, then $|D_{{\cal C}'}|\leq 7k$. 
\end{lemma}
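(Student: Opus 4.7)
The plan is essentially to combine Observation 1.1 (specialized to $\tau = 2$) with the definition of $k$-colorability. Observation 1.1 already gives the geometric heavy lifting: for an even side length $\tau$, at most $2\tau + 2 + (\tau/2)^2$ pairwise disjoint unit disks can intersect a $\tau \times \tau$ cell. Plugging in $\tau = 2$ yields the clean numerical bound $4 + 2 + 1 = 7$.

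The first step is to exploit the color partition. Since $D_{\mathcal{C}'}$ is a $k$-colorable solution, the defining coloring $\chi$ lets me partition $D_{\mathcal{C}'}$ into at most $k$ classes $D_1', D_2', \ldots, D_k'$ where each class consists of pairwise disjoint disks (overlapping disks must get distinct colors, so monochromatic disks are automatically disjoint). Thus it suffices to bound $|D_a'|$ for a single color class $a$.

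Next, I would observe that every disk $d \in D_{\mathcal{C}'}$ covers at least one point $p \in P \cap \mathcal{C}'$, since $D_{\mathcal{C}'}$ is defined as a cover of $P \cap \mathcal{C}'$; consequently $d \cap \mathcal{C}' \neq \emptyset$. Restricting to one color class $D_a'$, I now have a collection of pairwise disjoint unit disks each of which intersects $\mathcal{C}'$. Observation 1.1 with $\tau = 2$ then forces $|D_a'| \leq 7$. Summing over the at most $k$ color classes gives $|D_{\mathcal{C}'}| \leq 7k$, as required.

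Because Observation 1.1 is doing all the geometric work, there is no real obstacle here — the lemma is a one-line corollary of the observation plus the pigeonhole-style decomposition by color. The only small point worth stating carefully is the implication ``covers a point of $\mathcal{C}'$ $\Rightarrow$ intersects $\mathcal{C}'$,'' which is needed to apply the observation to the disks of each color class.
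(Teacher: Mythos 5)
Your proof is correct and follows essentially the same route as the paper: decompose the $k$-colorable solution into $k$ monochromatic classes of pairwise disjoint disks, each of which intersects ${\cal C}'$, and bound each class by $7$ via Observation~\ref{observation1.1} with $\tau=2$. Your phrasing is in fact cleaner than the paper's (which constructs $k$ candidate disjoint families and argues containment rather than partitioning the given solution by color), and you rightly make explicit the small step that a disk covering a point of $P\cap{\cal C}'$ must intersect ${\cal C}'$.
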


\begin{proof}
 From Observation \ref{observation1.1} the cardinality of any set $S_{{\cal C}'}$ of pairwise disjoint unit disks intersecting with a grid cell ${\cal C}'$ is at most 7 as $\tau=2$.  Therefore, $|S_{C'}|\leq 7$. Now consider another $k-1$ sets $S_{{\cal C}'}^1$, $S_{{\cal C}'}^2$, $\ldots$, $S_{{\cal C}'}^{k-1}$, each of which can either be a replica of the same collection of disks in $S_{{\cal C}'}$ or a rotation or transformation of $S_{{\cal C}'}$ such that disks within each set remain pairwise disjoint and intersect ${\cal C}'$. Hence, any $k$-colorable solution $D_{{\cal C}'}\subseteq S_{{\cal C}'}\cup S_{{\cal C}'}^1\cup S_{{\cal C}'}^2 \cup \ldots \cup S_{{\cal C}'}^{k-1}$ if the union of the disks in $\big(\bigcup\limits_{i=1}^{k-1} S_{{\cal C}'}^{i}\big)\cup S_{{\cal C}'}$ covers all the points in $P\cap {\cal C}'$ .  Thus, the lemma follows. 
\end{proof}


The outline of our algorithm (Algorithm \ref{4.alg:SSC}) for computing a cover $D'\subseteq D$ of the points $P$ is as follows. We first partition the rectangular region containing the objects in $D$ and $P$ into individual grid cells of size $\tau \times \tau$. By utilizing the bound obtained in Observation \ref{observation1.1} (for e.g., for $\tau=2$ the actual bound is in Lemma \ref{lemma-1.1}) we compute a $k$-colorable cover of the points lying in each grid cell, in an exhaustive manner. To ensure that there is no conflict in the overall aggregate solution, we use a handing-over logic. Only disks of any particular grid cell cover centered within the same grid cell are colored with the associated color set. If a disk is required to be a part of this grid cell cover, but is centered in another grid cell, it is handed-over to that grid cell. Based on the grid width $\tau$ and the diameter of the disk, we then define a coloring scheme $\chi$ that assigns a color to each disk in the union $D'$ of all the individual grid cell covers computed. Finally, we return the pair $(D',\chi)$. Since the diameter of the disks is fixed to be two units, the approximation factor of the algorithm is implied by the choice of the value $\tau$. If the value of $\tau$ is 2, then a unit disk can participate in the $k$-colorable covers of points lying in four adjacent grid cells. Hence, we prove that Algorithm \ref{4.alg:SSC} is a 4-approximate algorithm (see Theorem \ref{theorem-2.2}). Later, we show that by varying the grid width $\tau$, which results in a unit disk participating in more than four individual grid cell covers, we can obtain a family of algorithms with  approximation factors corresponding to the choice of the value of $\tau$ (see Subsection \ref{generalize2.2}).

We now define any coloring function that assigns colors to disks to be conflict-free if for any pair of non-disjoint disks (i.e., overlapping disks) the colors assigned to them are different.

\begin{lemma} \label{lemma-conflict}
 The coloring $\chi$ defined by Algorithm \ref{4.alg:SSC} is conflict-free.
\end{lemma}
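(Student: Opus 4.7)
The plan is to exploit the two-level structure of $\chi$: a cell-level palette assignment combined with an in-cell $k$-coloring, and to show that any two overlapping disks in $D'$ are separated either by the palette choice or by the within-cell coloring.

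First, I would recall the palette-assignment scheme suggested by Fig.~\ref{fig-2}. Because $\tau=2$ and each unit disk has diameter $2$, a unit disk can reach at most the four grid cells sharing a common corner; accordingly, the algorithm maintains four pairwise disjoint color sets $C_{0,0},C_{0,1},C_{1,0},C_{1,1}$, each of size $k$, and assigns palette $C_{i\bmod 2,\,j\bmod 2}$ to the grid cell with identifier $(i,j)$. By the handing-over rule at line 11 of Algorithm~\ref{4.alg:SSC}, every disk $d\in D'$ receives a color from the palette of its \emph{home cell}, i.e., the grid cell containing the center of $d$.

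Now fix two overlapping disks $d,d'\in D'$ with centers $c,c'$ whose home cells have IDs $(i,j)$ and $(i',j')$. Since $d\cap d'\neq\emptyset$ and both radii are $1$, $\|c-c'\|\leq 2$, and because $\tau=2$ this forces $|i-i'|\leq 1$ and $|j-j'|\leq 1$. I would then split into two cases. If the home cells differ, at least one of $|i-i'|,|j-j'|$ equals $1$, so the corresponding parity flips and $C_{i\bmod 2,\,j\bmod 2}\neq C_{i'\bmod 2,\,j'\bmod 2}$; since the four palettes are pairwise disjoint, $\chi(d)\neq\chi(d')$ immediately. If the home cells coincide, both $d$ and $d'$ draw their colors from the single palette attached to $\mathcal{C}=\mathcal{C}'$; here I would invoke the in-cell routine, which by Lemma~\ref{lemma-1.1} deals with at most $7k$ candidate disks and, via the exhaustive enumeration in Algorithm~\ref{4.alg:SSC}, produces a proper $k$-coloring of its cover, so the pair $(d,d')$ is separated by construction.

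The hard part will be the same-home-cell case, and more precisely ensuring that $d$ and $d'$ are coloured together inside \emph{one} invocation of the in-cell routine rather than by two independent invocations that happen to share a palette. This reduces to checking the handing-over protocol: every disk whose home cell is $\mathcal{C}$ and that ever enters some other cell's cover must be handed over to $\mathcal{C}$ before $\mathcal{C}$ is itself processed. The left-to-right, bottom-to-top ID order fixed earlier is precisely what makes this possible, and together with the (harmless) general-position assumption that no disk centre lies on a grid boundary, it lets the coloring within $\mathcal{C}$ see $d$ and $d'$ simultaneously; once this bookkeeping is verified, conflict-freeness follows from the case analysis above.
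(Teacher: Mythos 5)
Your argument hinges on the invariant that every disk in $D'$ receives a color from the palette of its \emph{home cell} (the cell containing its center), but this is not what Algorithm \ref{4.alg:SSC} actually guarantees. Line 11 hands a disk over to its home cell ${\cal C}'$ only in one direction: when $id_{{\cal C}'} > id_{\cal C}$, where ${\cal C}$ is the cell that selected it. A disk selected by a cell whose ID is larger than (or incomparable with) that of its home cell is \emph{not} handed over and keeps the selecting cell's palette. This undermines both branches of your case analysis. In the ``different home cells'' branch, the palettes that matter are those of the \emph{coloring} cells, not the home cells; two overlapping disks with adjacent home cells can end up colored by cells whose indices differ by two, and such cells share a palette under the parity scheme. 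In the ``same home cell'' branch, the two disks need not both reach a single invocation of the in-cell routine: one may be retained by the cell above ${\cal C}$ (no handover, since that cell's ID exceeds ${\cal C}$'s) while the other is handed down to ${\cal C}$, so they are colored by two different cells and the properness of the in-cell $k$-coloring says nothing about them. The bookkeeping you defer to the final paragraph --- ``every disk whose home cell is ${\cal C}$ \ldots{} must be handed over to ${\cal C}$'' --- is precisely the claim that fails, and the processing order alone does not rescue it.

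The paper's proof takes a different, deliberately asymmetric route. It argues that after the palette assignment the only surviving conflict scenario is two disks $d,d'$ centered in the \emph{same} cell ${\cal C}$ but initially selected by two cells flanking ${\cal C}$ (e.g., the cells directly above and below), these being the nearest cells that share a palette. The one-directional handover then recolors exactly one of the two --- the one selected by the flanking cell with the smaller ID --- with ${\cal C}$'s palette, while the other retains the opposite flanking cell's palette; since the four palettes are pairwise disjoint, the conflict dissolves. To salvage your cleaner two-case structure you would need either to change the algorithm so that every disk is always returned to its home cell (symmetric handover), or to recast your entire case analysis in terms of the coloring cell rather than the home cell and then re-establish which pairs of coloring cells can arise for two overlapping disks.
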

\begin{proof}

 For the sake of contradiction, let us assume that there are two disks 
 $d, d' \in D'$ such that $d\cap d'\neq\emptyset$, and $\chi(d)=\chi(d')$, where $D'$ along with $\chi$ is the output of Algorithm \ref{4.alg:SSC}. Since $d\cap d'\neq\emptyset$, the distance between the centers of $d$ and $d'$ is at most 2. Let the centers of $d$ and $d'$ be lying in the grid cells ${\cal C}$ and ${\cal C}'$, respectively. Observe that ${\cal C}$ and ${\cal C}'$ are either linearly or diagonally adjacent. If $d$ and $d'$ are chosen to cover points lying only in the respective grid cells, then $d$ and $d'$ are assigned colors from different color sets because the row and column numbers\hspace{-0.3em}$\mod 2$ in their ID's are not the same for both (see for-loop at Line 17) (contradicting that $\chi(d)=\chi(d')$). Therefore, the only possibility for color-conflict to arise between $d$ and $d'$ is that when both $d$ and $d'$ are centered in the same grid cell ${\cal C}$, where $d$ covers a point lying in the cell above ${\cal C}$ and $d'$ covers a point lying in the cell below ${\cal C}$ and each disk is initially chosen by the respective grid cell by means of the algorithm (Note that a similar case can be studied for horizontally and diagonally opposite grid cells). As per our color scheme (Line 16-28), these grid cells are the nearest to have the same color set (say $C_1$) associated with them (see Fig. \ref{fig-2}). Step 11 in the algorithm solves the conflict that arises in this case as follows. By means of the grid cell ID condition, disk $d'$ is handed over to grid cell ${\cal C}$ as its ID is greater and will receive the color set associated with that cell ($C_3$ in this case, see Fig. \ref{fig-2}). Disk $d$ however will not be handed over, but retains a color from the color set $C_1$ (contradicting that $\chi(d)=\chi(d')$). Thus, the lemma follows.  
\end{proof}

\begin{algorithm}
\caption{K\_Colorable\_Cover$(P, D, k)$}

\hspace*{\algorithmicindent} \textbf{Input:} A set $P$ of $n$ points, a set $D$ of $m$ unit disks in the plane, and an integer $k (>0)$ such that $P\subset\cup_{d\in D}d$ and $D$ can provide a $k$-colorable cover of the points in $P$. 
\\
\hspace*{\algorithmicindent} \textbf{Output:} A $k$-colorable set $D'\subseteq D$ that covers all the points in ${\cal P}$ and a color mapping $\chi:D' \rightarrow \kappa$, where $\kappa$ denotes the color set of distinct colors, and $|\kappa|\leq 4k$
\begin{algorithmic}[1]
\State Let the points in $P$ and disks in $D$ be lying entirely within the first quadrant of the coordinate system, and ${\cal R}$ be an axis-aligned rectangular region containing $P$ and $D$, whose left and bottom boundary lines coincide with the $y$- and $x$-axes of the coordinate system, respectively.
\State Define a grid ${\cal G}$ that partitions ${\cal R}$ such that each grid cell is of size $2 \times 2$ and for each point $p=(x_p, y_p)$ lying in such a cell ${\cal C}$, let the unique id associated with ${\cal C}$ be $id_{\cal C}=[\lfloor\frac{x_p}{2}\rfloor, \lfloor\frac{y_p}{2}\rfloor]$. For each such cell ${\cal C}$, we also define a handover set $H_{\cal C} \leftarrow \emptyset$.\newline
/* the grid cells in the following loop are considered in row-wise order from bottom-left to top-right, as defined in Subsection \ref{section2.1} */
\For {each grid cell ${\cal C}$ if $P\cap {\cal C}\neq\emptyset$}
\If {$H_{\cal C} = \emptyset$}
\State Let $D''=\{d\in D \hspace{2mm}|\hspace{2mm} d\cap {\cal C}\neq\emptyset \}$
\State Generate all subsets $D_1, D_2, \ldots, D_{O(m^7)}\subseteq D''$, each containing at most 7 pairwise disjoint disks, and among these, choose $k$ subsets $ S_{{\cal C}}^1,  S_{{\cal C}}^2, \ldots, S_{{\cal C}}^{k}$, whose union covers all the points in $P\cap {\cal C}$. 
\Else 
\State Let $D''=\{d\in D \hspace{2mm}|\hspace{2mm} d\cap {\cal C}\neq\emptyset,  d\notin H_{\cal C} \}$. 
\State Generate all subsets $D_1, D_2, \ldots, D_{O(m^7)}\subseteq D''$, each containing at most 7 pairwise disjoint disks, and among these, choose $k$ subsets $ S_{{\cal C}}^1,  S_{{\cal C}}^2, \ldots, S_{{\cal C}}^{k}$, whose union covers all the points in $P\cap {\cal C}$ but also contains all disks $d \in H_{\cal C}$. \newline /* $|S_{{\cal C}}^1\cup S_{{\cal C}}^2\cup \ldots S_{{\cal C}}^{k}\cup  H_{\cal C}|\leq 7k$ due to Lemma \ref{lemma-1.1} */
\EndIf
\State If any disk $d$ in any subset $S_{{\cal C}}^{i}$ (for $i=1, \ldots, k$) is centered in another grid cell ${\cal C'}$ whose ID $id_{{\cal C}'} > id_{\cal C}$, we remove that disk from $S_{\cal C}^{i}$ and add it to the handover set of that cell $H_{{\cal C}'}$.
\State $D_{\cal C}\leftarrow S_{{\cal C}}^1\cup S_{{\cal C}}^2 \cup \ldots \cup S_{{\cal C}}^{k}$
\State For every point $p \in P$ that is covered by a disk $d \in D_{\cal C}$ we remove it from $P$.

\EndFor
\algstore{myalg}
\end{algorithmic}
\label{4.alg:SSC}
\end{algorithm}

\begin{algorithm}
\begin{algorithmic}[1]
\algrestore{myalg}
\State $D'\leftarrow \bigcup\limits_{{\cal C}, P\cap{\cal C}\neq \emptyset}D_{\cal C}$
\State Let $C_1, C_2, C_3, C_4$ be four disjoint color sets, each containing $k$ distinct colours.
\For {every grid cell ${\cal C}$ with $id_c = [i, j]$}
\If {$((i\bmod2=0) \land (j\bmod2=0))$}
    \State Assign $C_1$ to ${\cal C}$.
\ElsIf {$((i\bmod2=0) \land (j\bmod2\neq0))$}
    \State Assign $C_2$ to ${\cal C}$.
\ElsIf {$((i\bmod2\neq0) \land (j\bmod2=0))$}
    \State Assign $C_3$ to ${\cal C}$.
\ElsIf {$((i\bmod2\neq0) \land (j\bmod2\neq0))$}
    \State Assign $C_4$ to ${\cal C}$.
\EndIf
\EndFor
\State For every grid cell ${\cal C}$ and its assigned color set $C_i$ we allot one color each to the subsets $S_C^1, S_C^2, \ldots S_C^k$. Every disk centered within that subset will now be colored with the corresponding color from the color set. 
\State For any disk $d\in D'$ let $\chi(d)$ represents the color assigned to the disk $d$ in the above coloring assignment process.\\
\Return{$(D', \chi)$}
\end{algorithmic}
\end{algorithm}


%

\begin{theorem} \label{theorem-2.2}
Algorithm \ref{4.alg:SSC} is a 4-approximation algorithm that runs in $O(m^{7k}n\log k)$ time for the $k$-CDUDC problem.
\end{theorem}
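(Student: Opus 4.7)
The plan is to verify three things: (i) $D'$ covers $P$, (ii) $\chi$ is a conflict-free coloring using at most $4k$ colors, and (iii) the algorithm runs in $O(m^{7k} n \log k)$ time. Items (i) and (ii) together yield the 4-approximation, since any feasible $k$-colorable cover needs at least $k$ colors and the algorithm uses at most $4k$.

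For correctness, I would first note that the enumeration at Lines 6 and 9 searches for $k$ subsets of $D''$, each consisting of at most $7$ pairwise disjoint unit disks intersecting ${\cal C}$, whose union covers $P \cap {\cal C}$ (and contains $H_{\cal C}$). Lemma \ref{lemma-1.1} guarantees such a decomposition exists whenever $P \cap {\cal C}$ admits any $k$-colorable cover, because each color class of such a cover is a pairwise disjoint set of at most $7$ disks by Observation \ref{observation1.1}. Since the input admits a global $k$-colorable cover, its restriction to $P \cap {\cal C}$ also does, so the exhaustive search always succeeds and $D' = \bigcup_{\cal C} D_{\cal C}$ covers $P$. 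Conflict-freeness of $\chi$ follows from Lemma \ref{lemma-conflict}. The parity-based assignment in Lines 17--23 uses exactly four disjoint color sets of $k$ colors each, giving $|\kappa| \leq 4k$ and the factor-4 ratio.

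For the running time, my plan is to account for the work per grid cell. Enumerating all subsets of at most $7$ pairwise disjoint disks from $D''$ takes $O(m^7)$ time (there are $O(m^7)$ such subsets). Choosing $k$ of these to form the cell's cover requires examining $O(m^{7k})$ candidate $k$-tuples. For each candidate, checking whether the union covers $P \cap {\cal C}$ and contains the required handover disks takes $O(|P \cap {\cal C}| \log k)$ time, with the $\log k$ factor arising from bookkeeping that tracks, for each point, which of the $k$ subsets first covers it. Summing $|P \cap {\cal C}|$ over all cells telescopes to $n$, yielding a total of $O(m^{7k} n \log k)$. The handover step and the parity-based coloring contribute only lower-order terms.

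The hard part will be cleanly justifying the exhaustive enumeration of ``$k$-tuples of subsets of at most $7$ pairwise disjoint disks''. I must argue that for $\tau = 2$ every $k$-colorable cover of $P \cap {\cal C}$ decomposes into $k$ color classes, each a set of at most $7$ pairwise disjoint unit disks intersecting ${\cal C}$, so the generated family always contains such a decomposition. A secondary subtlety is showing that the handover mechanism at Line 11 preserves the $\leq 7k$ bound at each subsequent cell: handed-over disks remain pairwise disjoint within their source color class, so incorporating $H_{\cal C}$ into the new cell's decomposition is compatible with the per-class bound of $7$ from Observation \ref{observation1.1}, keeping Lemma \ref{lemma-1.1} applicable uniformly across cells.
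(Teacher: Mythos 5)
Your proposal is correct and follows essentially the same route as the paper's proof: exhaustive enumeration of $O(m^{7k})$ candidate $k$-tuples of at-most-7-disjoint-disk subsets per cell (justified by Observation \ref{observation1.1} and Lemma \ref{lemma-1.1}), conflict-freeness via Lemma \ref{lemma-conflict} and the handover rule, and the factor 4 from the four parity-indexed color sets. The one underspecified point is your $\log k$ factor: a naive coverage check of a point against $7k$ disks costs $O(k)$, and the paper obtains $O(\log k)$ per point by building the Voronoi diagram of the $7k$ disk centers and answering point-location (nearest-center) queries, which your ``bookkeeping'' phrasing does not quite pin down.
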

\begin{proof}
In line 2 of Algorithm \ref{4.alg:SSC}, the rectangular region ${\cal R}$ is partitioned into $O(n)$ grid cells. For each grid cell ${\cal C}$, let $n_{\cal C}$ denote the number of points of $P$ lying in ${\cal C}$, i.e., $n_{\cal C}=|P\cap {\cal C}|$. For each grid cell ${\cal C}$ we then enumerate all $O(m^7)$ subsets of $D$ such that each such subset contains at most 7 pairwise disjoint disks intersecting with the cell ${\cal C}$. In each iteration of the for-loop at line 3, among ${cm^7\choose k }$ collections of $k$ subsets together containing at most $7k$ disks for some constant $c$, we compute a set $D_{\cal C}$ that covers all $n_{\cal C}$ points lying in ${\cal C}$. In order to do this, we first compute the voronoi diagram $VOD_{\cal C}$ on the center points of these $7k$ disks. We then do point location queries for each of these $n_{\cal C}$ points on $VOD_{\cal C}$ to determine the closest center point. We will then test whether the corresponding disk centered at the closest center point covers this point. This step will take $O(k\log k+n_{\cal C}\log k)$ time. Therefore, we invest $O(m^{7k}(k\log k+n_{\cal C}\log k))$ time to compute $k$-colorable cover of the points lying in the cell ${\cal C}$. The total time the for-loop takes over all the nonempty grid cells is $O(m^{7k}k\log k)+ \sum\limits_{{\cal C}}O(m^{7k}n_{\cal C}\log k)$=$O(m^{7k}(n+k)\log k)$. The remaining steps of the algorithm (including the preprocessing in steps 1 and 2) will take no more than $O(mn)$ time. Under the assumption that $n$ is much larger than the number $k$ of colors, the time complexity of the algorithm in total is $O(m^{7k}n\log k)$ . In order to assign the colors at a later step of the algorithm, with each grid cell ${\cal C}$ we also associate and store the corresponding cover $D_{\cal C}$ computed in the for-loop at line 3. Thus, the additional space the algorithm requires is $O(nk)$.  
 
The handover behaviour at line 11 of the algorithm and Lemma \ref{lemma-conflict} ensure that any color-conflict is resolved in an elegant way. Since we use four disjoint sets of $k$ distinct colors and for each grid cell we compute $k$-colorable unit disk cover (from Lemma \ref{lemma-1.1}), the approximation factor of the algorithm is 4. Thus, the theorem follows.   
 
\end{proof}

The following corollary says that Theorem \ref{theorem-2.2} yields a faster algorithm for {\it $3$-CDUDC} than that of Biedl et al.\cite{BBL19}, but at the cost of increase in approximation factor.
\begin{corollary} \label{corollary-4.3}
There exists a 4-approximate algorithm that runs in $O(m^{21}n)$ time for the {\it $3$-CDUDC} problem.
\end{corollary}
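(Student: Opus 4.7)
The plan is to derive this as an immediate specialization of Theorem \ref{theorem-2.2} by substituting $k=3$. Since Theorem \ref{theorem-2.2} asserts the existence of a 4-approximation algorithm (namely Algorithm \ref{4.alg:SSC}) running in $O(m^{7k} n \log k)$ time for arbitrary positive integer $k$, instantiating $k=3$ produces an algorithm for the \emph{3-CDUDC} problem with the same 4-approximation guarantee, because the approximation ratio in Theorem \ref{theorem-2.2} is independent of $k$.

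Next, I would evaluate the running-time expression at $k=3$: we obtain $O(m^{7 \cdot 3} n \log 3) = O(m^{21} n \log 3)$. Since $\log 3$ is an absolute constant, it is absorbed into the big-$O$ notation, yielding $O(m^{21} n)$, which matches the claimed bound. No additional argument about correctness of the coloring or of the cover is needed, as Lemmas \ref{lemma-1.1} and \ref{lemma-conflict} were already invoked in the proof of Theorem \ref{theorem-2.2} uniformly in $k$, and in particular apply when $k=3$.

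There is essentially no obstacle here beyond writing down the substitution cleanly; the corollary is a direct quantitative reading of Theorem \ref{theorem-2.2} at the special value $k=3$. The only thing worth emphasizing in the written-out proof is that the bound $|D_{\cal C}'|\leq 7k$ from Lemma \ref{lemma-1.1} becomes $|D_{\cal C}'|\leq 21$ for $k=3$, which keeps the enumeration step at line 6/9 of Algorithm \ref{4.alg:SSC} at $O(m^{21})$ subsets per grid cell, confirming the per-cell cost and hence the overall $O(m^{21}n)$ bound.
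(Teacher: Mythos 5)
Your proposal is correct and is exactly the paper's argument: the corollary is obtained by setting $k=3$ in Theorem \ref{theorem-2.2}, with $\log 3$ absorbed as a constant to give $O(m^{21}n)$. The extra remark about the $7k=21$ bound from Lemma \ref{lemma-1.1} is consistent with the paper but not needed beyond the direct substitution.
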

\begin{proof}
 Follows from Theorem \ref{theorem-2.2} by applying $k=3$.
\end{proof}

\subsection{Improved Algorithm for a Grid Cell}\label{localimproved2.4}
 In this subsection, we try to improve on the brute force approach of selecting 7 or fewer disks (7 being the max when $\tau$ is 2). We suggest various possible ways of improvement.
  
We divide our grid cell $\it C$ of size $\tau\times \tau$ into 4 equal subgrid cells (see Fig. \ref{1.improv1}(a)). Therefore, the edge of each subgrid cell would be $\frac{\tau}{2}$, i.e., it would be between $\frac{1}{2}$ to 1, since $1 \leq \tau \leq 2$. We consider a subset $D'$ of disks that can cover points in 2 diagonally opposite pair of subgrid cells such that the center of these disks lies in the respective quadrant of the subgrid cells, assuming that $\it C$ is placed such that the center of $\it C$ is at the origin of the $x, y$-coordinate system. Suppose that ${\it C}_1$ and ${\it C}_2$ are these subgrid cells chosen and that $D'$ would contain only those disks that have centers above and towards the right for ${\it C}_1$, and below and towards left for the dividing boundaries for ${\it C}_2$ (see Fig. \ref{1.improv1}). 
 
 \begin{observation}
  The maximum number of pairwise disjoint disks in $D'$ is at most 5.
 \end{observation}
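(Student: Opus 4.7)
The plan is to place $C$ with its center at the origin, so that $C_1=[0,\tau/2]^2$ and $C_2=[-\tau/2,0]^2$, and to split $D'=D_1'\cup D_2'$ according to which of these two subcells a disk covers. By hypothesis the centers of $D_1'$ lie in $F_1 := (\text{first quadrant}) \cap (\text{Minkowski sum of } C_1 \text{ with the unit disk})$, and the centers of $D_2'$ in the analogous third-quadrant region $F_2$. Since $\tau/2+1\leq 2$, we have $F_1 \subseteq [0,2]^2$ and $F_2 \subseteq [-2,0]^2$, while pairwise disjointness of the unit disks forces pairwise distance at least $2$ between centers.

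First I would establish the per-side bound $|D_1'|\leq 3$ (and by symmetry $|D_2'|\leq 3$). Partition $[0,2]^2$ into the four unit subsquares $A_1=[0,1]^2$, $A_2=[1,2]\times[0,1]$, $A_3=[0,1]\times[1,2]$, $A_4=[1,2]^2$, observing that $A_4\cap F_1$ lies in the quarter-disk of radius $1$ about $(\tau/2,\tau/2)$. Each subsquare has diameter $\sqrt{2}<2$, so carries at most one center. A short computation --- parametrizing $p_4\in A_4\cap F_1$ as $(\tau/2+\cos\theta,\tau/2+\sin\theta)$ and taking $p_2=(\tau/2+1,0)$, respectively $p_3=(0,\tau/2+1)$, as the extremal candidates --- shows that a center in $A_4\cap F_1$ is compatible with a center in $A_2$ only when $\sin\theta-\cos\theta\geq 1/2$ and with a center in $A_3$ only when $\cos\theta-\sin\theta\geq 1/2$; these conditions are mutually exclusive, so any triple of pairwise-disjoint centers in $F_1$ must include one in $A_1$.

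The key coupling comes from the triangle inequality: for any $p\in F_1$ and $q\in F_2$ with $|p-q|\geq 2$, one has $|p|+|q|\geq |p-q|\geq 2$. I then split into two cases. In Case (a), every center $p\in D_1'$ satisfies $|p|\geq 1$, so $D_1'\subseteq F_1\cap\{|p|\geq 1\}$; I partition this region along the diagonal $y=x$ into two angular sectors and check that each sector has diameter strictly less than $2$. The extremal pair, for $\tau=2$, is the axis point $(\tau/2+1,0)$ together with the corner-arc point $(\tau/2+1/\sqrt{2},\tau/2+1/\sqrt{2})$, at distance $\sqrt{3}<2$; for smaller $\tau$ the region shrinks and the bound only improves. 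Hence each sector holds at most one center and $|D_1'|\leq 2$. In Case (b), some $p\in D_1'$ has $|p|<1$; the coupling then forces every $q\in D_2'$ to satisfy $|q|>1$, and Case (a) applied to $D_2'$ gives $|D_2'|\leq 2$. Either way $|D_1'|+|D_2'|\leq 3+2=5$.

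The main obstacle is the diameter calculation in Case (a): the boundary of $F_1\cap\{|p|\geq 1\}$ combines the Minkowski-sum corner arc, the axis-parallel sides of $F_1$, and the unit circle at the origin, so one must verify that every extremal pair lies strictly below distance $2$. Once this is in hand, the rest is a clean application of the coupling inequality. Note the bound is tight for $\tau=2$: the centers $\{(0,0),(2,0),(0,2)\}\cup\{(-2,0),(0,-2)\}$ realize a pairwise-disjoint $D'$ of size exactly $5$.
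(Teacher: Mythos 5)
Your proof is correct in structure and reaches the right bound, but it takes a genuinely different route from the paper. The paper's own argument is a short heuristic count: a subgrid cell of side at most $1$ meets at most $4$ pairwise disjoint unit disks (``one per edge''), restricting centers to a single quadrant drops this to $3$ per cell, and the two diagonally opposite cells ``share a common disk,'' giving $3+3-1=5$. The paper never actually justifies the ``$-1$'': it does not show that any configuration attaining $3$ on each side must reuse a disk. Your triangle-inequality coupling $|p|+|q|\ge|p-q|\ge 2$ is precisely the missing rigorous content --- it shows that a disk centered within distance $1$ of the shared corner on one side forces every center on the other side out to $|q|>1$, and your sector-diameter argument then caps that side at $2$; if no such close disk exists, the same diameter argument caps the first side at $2$ directly. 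Combined with your pigeonhole bound of $3$ per side via the four unit subsquares of $[0,2]^2$, this gives $5$ without ever invoking a ``shared'' disk. What the paper's version buys is brevity and an intuitive picture; what yours buys is an actual proof, plus the tightness example showing $5$ is attained at $\tau=2$.

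Two computational details deserve care before this is written out in full. First, in the Case (a) diameter check, the extremal pair in the lower sector of $F_1\cap\{|p|\ge 1\}$ (for $\tau=2$) is not $(2,0)$ against the corner-arc point $\bigl(1+\tfrac{1}{\sqrt2},1+\tfrac{1}{\sqrt2}\bigr)$ at distance $\sqrt3$, but rather $(1,0)$ against that corner-arc point, at distance $\sqrt{2+\sqrt2}\approx 1.848$; this is still strictly below $2$, so the conclusion survives, but the stated extremal pair is wrong. Second, in the $A_2/A_3/A_4$ exclusion, the farthest point of $A_2\cap F_1$ from $p_4=(1+\cos\theta,1+\sin\theta)$ is $(1,0)$ rather than $(2,0)$ when $\cos\theta\ge\tfrac12$, so the compatibility conditions are not exactly $\sin\theta-\cos\theta\ge\tfrac12$ and its mirror; one must check both corner candidates. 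The two compatibility regions remain disjoint (at $\theta=\pi/4$ all the relevant distances are below $2$), so the mutual-exclusivity conclusion and hence $|D_1'|\le 3$ still hold. Neither issue is a gap in the approach, only in the specific numbers quoted.
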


\begin{proof}
  This can be estimated by considering our earlier observation that for a grid cell of edge length less than 1, at most 4 disjoint disks are required to cover it maximally. The 4 disks required can be seen as 1 disk for covering each edge. Effectively when we consider disks with centers in a particular quadrant, we are covering only 3 sides for each subgrid cell. The sub grid cells being diagonally opposite have a common disk, so effectively we are looking for $3 + 3 - 1 = 5$ non-intersecting disks to cover both the diagonally opposite subgrid cells maximally 
  (see Fig. \ref{1.improv1}(b)). Also this being an upper bound for case where $\tau$ is 2 and reduces as the value of $\tau$ reduces. 
\end{proof}

Given the points lying in $(C_1\cup C_2)$ and being covered by $D'$, it is a fair possibility that we might require less than 5 disks to cover the points in diagonally opposite sub-grid cells $C_1$ and $C_2$. So we calculate combinations of $1, 2, \ldots, 5$ disks of $D'$ such that they cover the points in diagonally opposite subcells and are pairwise interior-disjoint. For each of this combination calculations we require ${m\choose 1}, {m\choose 2}, \ldots , {m\choose 5}$ possibilities respectively making the complexity to be $O(m^{5})$. For each of these possibilities, we remove from the given disk set $D$ all those disks that might overlap with the chosen disks and then arrange the remaining disk centers lying in the quadrants containing the remaining subgrid cells in increasing values of $x$ and $y$ coordinate. We choose the disk for the upper leftmost subgrid cell (call it $C_3$) with its center having minimum $x$ and $y$ coordinates (i.e., a disk whose center is lexicographically closest to $C_3$) covering all the required points. Likewise, we choose the disk for the bottom rightmost grid cell (call it $C_4$) with its center having a minimum $x$ and $y$ coordinate covering the points. Overall the complexity for sorting turns out to be $O(m\log m)$. Hence the overall complexity for selecting the disks reduces from $O(m^{7})$ (ignoring $n$ and $k$) to $O(m^{6} \log m)$ for $m$ disks for a given color.

 \begin{figure}[h]
\centering
\subfloat[Dividing grid cell ${\it C}$ into 4 smaller subgrid cells]{
\includegraphics[width=.4\textwidth]{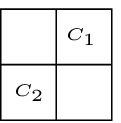}
}
\subfloat[Maximum possible number of disks in $D'$]{
\includegraphics[width=.6\textwidth]{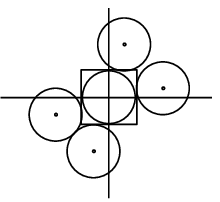}
}
\caption{Improvement in running time}
\label{1.improv1}
\end{figure}

We also repeat the above procedure by considering all possible combinations ${m\choose 1}, {m\choose 2}, \ldots , {m\choose 5}$ for the diagonally opposite subgrid cells $C_3$ and $C_4$ and for each subset of at most 5 non-intersecting disks from each of such combinations by finding two disks (pairwise disjoint from each of these 5 disks), lexicographically closest to subgrid cells $C_1$ and $C_2$ respectively, as above. The overall time is still $O(m^{6} \log m)$ excluding the factors $n$ and $k$.

However, by presorting the disks in $D$ by both the $x$- and $y$-coordinates of their centers and by maintaining these two orderings separately, we can improve the running time from $O(m^{6} \log m)$ to $O(m^{6})$ because given the cell $C$, extracting the sorted lists of disks for the remaining subgrid cells and choosing the disks from these lists to cover the required remaining points all can be done in $O(m)$ time. 

Hence, for a grid cell ${\cal C}$ we can generate all candidate subsets of at most 7 pairwise disjoint disks corresponding to each of $k$ colors, in $O(m^6)$ time. Thus, the running time of our exaustive search for a $k$-colorable cover of all points lying in a cell ${\cal C}$ can be improved by a factor of $O(m)$. The correctness of this procedure follows again due to the packing constraints (Lemma \ref{lemma-1.1}). Therefore, we have the following results.

\begin{theorem} \label{theorem-2.2.1}
We have a 4-approximation algorithm that runs in $O(m^{6k}n\log k)$ time for the $k$-CDUDC problem.
\end{theorem}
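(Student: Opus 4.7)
The plan is to obtain Theorem \ref{theorem-2.2.1} by plugging the faster candidate-generation routine from Subsection \ref{localimproved2.4} into Algorithm \ref{4.alg:SSC} in place of the brute-force enumeration at Lines 5--6 and 8--9, and then rerunning the analysis of Theorem \ref{theorem-2.2}. The first step is to observe that every other component of Algorithm \ref{4.alg:SSC} --- the grid partition, the handover rule at Line 11, the four-color palette assignment at Lines 16--28 --- is oblivious to \emph{how} the $k$ pairwise-disjoint subsets $S_{\cal C}^1,\ldots,S_{\cal C}^k$ of at most seven disks each are produced inside a cell ${\cal C}$. Consequently, once correctness of the new subroutine is established, both the conflict-freeness argument (Lemma \ref{lemma-conflict}) and the 4-approximation bound of Theorem \ref{theorem-2.2}, which rely only on the packing bound of Lemma \ref{lemma-1.1} and the geometry of the four-color scheme, transfer verbatim.

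The second step is to re-examine the running-time accounting. After a one-time $O(m\log m)$ presort of the disk centers by $x$- and $y$-coordinates, Subsection \ref{localimproved2.4} shows that for any cell ${\cal C}$ the family of at-most-7-disk pairwise disjoint subsets relevant to a single color class can be enumerated in $O(m^{6})$ time. A $k$-colorable cover of ${\cal C}$ is then a choice of one such subset per color, yielding $O(m^{6k})$ combinations to examine. For each, I would verify coverage of $P\cap {\cal C}$ exactly as in the proof of Theorem \ref{theorem-2.2}: construct a Voronoi diagram on the $O(k)$ chosen centers and run point-location for each of the $n_{\cal C}$ points, at a cost of $O((k+n_{\cal C})\log k)$. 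Summing over all nonempty cells and using $\sum_{\cal C} n_{\cal C} = n$ gives a total of $O(m^{6k}(n+k)\log k) = O(m^{6k} n\log k)$, absorbing the $O(mn)$ preprocessing and $O(m\log m)$ presort into lower-order terms.

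The main obstacle I expect is establishing \emph{completeness} of the improved subroutine, i.e., showing that every at-most-7-disk pairwise disjoint subset which an optimal $k$-colorable cover could assign to some color class is actually emitted. This reduces to verifying that, for the subgrid partition of ${\cal C}$ into four $\tau/2\times \tau/2$ subcells, any such subset contains at most five disks whose centers lie in a fixed diagonal pair of subcells, while the remaining two subcells are each covered by a single disk that is lexicographically closest to its subcell among the disks disjoint from the first five. Since the two diagonal pairings of subcells are symmetric, I would run the enumeration for both pairings so that no configuration is missed, and appeal to the packing bound of Lemma \ref{lemma-1.1} to rule out any other arrangement. Theorem \ref{theorem-2.2.1} then follows immediately by combining this completeness with the approximation and coloring arguments carried over from Theorem \ref{theorem-2.2}.
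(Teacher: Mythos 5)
Your proposal follows essentially the same route as the paper: substitute the $O(m^{6})$ per-color candidate-generation procedure of Subsection \ref{localimproved2.4} into Algorithm \ref{4.alg:SSC} and rerun the time analysis of Theorem \ref{theorem-2.2}, so the exponent drops from $7k$ to $6k$ while the conflict-freeness and 4-approximation arguments carry over unchanged. You are in fact more explicit than the paper about the one delicate point --- completeness of the restricted enumeration over the two diagonal subcell pairings --- which the paper dismisses with a one-line appeal to the packing constraints of Lemma \ref{lemma-1.1}.
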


\begin{corollary} \label{corollary-4.3}
There exists a 4-approximate algorithm that runs in $O(m^{18}n)$ time for the {\it $3$-CDUDC} problem.
\end{corollary}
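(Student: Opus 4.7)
The plan is to obtain this corollary as a direct specialization of Theorem~\ref{theorem-2.2.1}. Setting $k=3$ in the running-time bound $O(m^{6k}n\log k)$ yields $O(m^{18}n\log 3)$, and since $\log 3$ is a constant, it may be absorbed into the asymptotic notation to give the claimed $O(m^{18}n)$ time. The 4-approximation guarantee transfers without modification, because Theorem~\ref{theorem-2.2.1} establishes it uniformly for every positive integer $k$.

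The only consistency check required is that the hypothesis of Theorem~\ref{theorem-2.2.1} is genuinely met by any Yes-instance of 3-CDUDC. This is immediate from the problem definition in Section~\ref{section-2.2}: an instance of 3-CDUDC is, by definition, an instance of $k$-CDUDC with $k=3$, and the existence of a 3-colorable cover of $P$ using disks of $D$ is part of the input promise. Consequently, Algorithm~\ref{4.alg:SSC} executed with $k=3$ and equipped with the improved per-cell enumeration of Subsection~\ref{localimproved2.4} returns a 4-approximate 3-colorable cover within $O(m^{18}n)$ time.

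The main ``obstacle'' is really only notational: one should note that the improvement over the bound $O(m^{21}n)$ in the earlier Corollary~\ref{corollary-4.3} (based on Theorem~\ref{theorem-2.2}) comes solely from the improved grid-cell subroutine of Subsection~\ref{localimproved2.4}, which reduces the per-color exponent from $7$ to $6$. No new combinatorial argument is needed, and the conflict-free coloring step is unaffected by the choice of $k$, so correctness follows verbatim from Lemmas~\ref{lemma-1.1} and~\ref{lemma-conflict} together with Theorem~\ref{theorem-2.2.1}.
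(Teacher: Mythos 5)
Your proposal is correct and matches the paper's (implicit) derivation exactly: the corollary is obtained by substituting $k=3$ into Theorem~\ref{theorem-2.2.1}, absorbing the constant $\log 3$ into the $O$-notation, and noting that the 4-approximation guarantee is independent of $k$. No further argument is needed.
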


\subsection{Generalization}\label{generalize2.2}
In this subsection, we generalize the results from the preceding subsection to a general case observations for possible values of the width $\tau$ of each cell in the grid partitioning approach. We begin by attempting to generalize the potential coloring schemes for each grid cell $\tau\times \tau$ similar to the observations presented in Lemma \ref{lemma-1.1}. We first define a parameter $\rho$ that represents the factor indicating the number of additional color sets needed to satisfy a union of independent solution sets. As a result, we obtain a family of approximation algorithms for the $k$-{\it CDUDC} problem, depending upon a different possible values for $\rho$ and $\tau$. 

\begin{lemma} \label{lemma4}
    The union of all independently optimal $k$-colorable solution sets for points lying in each grid cell ${\cal C}$ of size $\tau\times\tau$ is $\rho k$ colorable, where
    
    $\rho = \begin{dcases*}
    4 & \hspace{8mm}if $\tau\geq2 $\\
    6 & \hspace{8mm}if $\frac{8}{5}\leq \tau < 2$\\
    7 & \hspace{8mm}if $\sqrt{2}\leq \tau < \frac{8}{5}$\\
    9 & \hspace{8mm}if $1 \leq \tau < \sqrt{2}$
    \end{dcases*}$
 \end{lemma}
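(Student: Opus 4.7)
The plan is to extend the cell-based $4$-coloring scheme of Subsection~\ref{section2.1} (cf.\ Fig.~\ref{fig-2} and Lemma~\ref{lemma-1.1}) to arbitrary grid width $\tau\in[1,2]$. For each cell $\cal C$ I would take its independently optimal $k$-colorable cover together with the handover logic of Algorithm~\ref{4.alg:SSC}, partition the grid cells into $\rho$ classes, and equip each class with its own disjoint palette of $k$ colors; every disk is then coloured from its owner cell's class palette, giving $\rho k$ colors in total.

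First I would determine $\rho$ geometrically, by identifying the maximum number of grid cells a single unit disk (diameter~$2$) can simultaneously intersect, or equivalently the maximum number of cells meeting the ball of radius~$1$ around some common point. A direct case analysis on the disk-centre position gives: $\rho=4$ when $\tau\geq 2$ (the disk sits at a shared corner of a $2\times 2$ block of cells); $\rho=6$ when $\tau\in[8/5,2)$ (the disk straddles a cell edge and meets a $3\times 2$ strip); $\rho=7$ when $\tau\in[\sqrt 2,8/5)$ (the disk is placed asymmetrically to pick up a ``plus'' of $5$ cells together with the $2$ diagonal cells on one side, with the threshold $\tau=8/5$ emerging from the simultaneous feasibility of $a^2+b^2\leq 1$, $(a-\tau)^2+b^2\leq 1$, and the reach constraint to the far cross cell); and $\rho=9$ when $\tau\in[1,\sqrt 2)$ (a disk placed at a cell centre reaches the full $3\times 3$ neighbourhood through the four diagonal corners).

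Next I would exhibit, for each range, a $\rho$-colouring of $\mathbb Z^2$ such that the cells in any disk's collision footprint receive distinct class labels. The $2\times 2$ parity pattern already used in Algorithm~\ref{4.alg:SSC} works for $\tau\geq 2$, and a straightforward $3\times 3$ periodic pattern works for $\tau\in[1,\sqrt 2)$. For the intermediate ranges I would use a brick-style or merged-diagonal pattern whose correctness can be verified by enumerating the finitely many footprint shapes and checking each is ``rainbow''. After handover each disk lies in a unique cell and inherits a colour from that cell's palette; two overlapping disks share a point $p$, and so their owner cells both lie within distance~$1$ of $p$, i.e.\ they belong to a common footprint, hence are in distinct classes and receive colours from disjoint $k$-palettes. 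Combined with the conflict-freeness of the individual within-cell $k$-colourings, this yields the desired $\rho k$-colouring of the union.

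The main obstacle will be the intermediate cases $\rho=6$ and $\rho=7$, where no sublattice of $\mathbb Z^2$ of index $\rho$ has all its nonzero vectors in the non-conflict region and where simple row/column-periodic patterns fail to make every footprint rainbow simultaneously. I expect to address these with a more refined construction (a brick pattern with a row-shift for $\rho=6$, and a merged-diagonal identification inside a larger fundamental tile for $\rho=7$), followed by an exhaustive check against the limited list of footprint shapes produced in the geometric step. Once those partitions are in hand, the $\rho k$-colorability of the union follows immediately.
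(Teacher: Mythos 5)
Your geometric case analysis --- computing $\rho$ as the maximum number of grid cells a single unit disk can meet, with the thresholds $2$, $\frac{8}{5}$ and $\sqrt{2}$ --- is exactly the content of the paper's own proof, which stops there and simply identifies that count with the number of disjoint palettes required. You correctly recognize that a further step is needed, namely a periodic assignment of $\rho$ palettes to cells under which every ``footprint'' is rainbow, and your reduction (two overlapping disks share a point $p$, so after handover their owner cells both meet the unit ball around $p$) is sound as a sufficient condition.

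The gap is that the required cell-coloring does not exist for the $\rho=6$ row, and is unproven for $\rho=7$. ``Both cells meet a common unit ball'' is equivalent to ``the two cells are at distance at most $2$,'' so a rainbow-footprint coloring is precisely a proper coloring of the graph $G_\tau$ on cells with adjacency given by distance at most $2$; its chromatic number is controlled by its clique number, not by the maximum footprint size, and these differ. Concretely, take the $3\times 2$ block of cells with indices $\{0,1,2\}\times\{0,1\}$ together with the cell indexed $(1,2)$. For every $\tau<2$ these seven cells are pairwise at distance at most $\tau<2$ (each pair either touches or is separated by a single cell-width gap), so each pair can own two overlapping unit disks and must receive distinct palettes; yet for $\tau>\frac{8}{5}$ no single unit disk meets all seven, because the three mutually far cells $(0,0)$, $(2,0)$, $(1,2)$ have pairwise intersecting $1$-neighborhoods but no common point --- this is exactly what the threshold equation $(\frac{\tau}{2})^2+(\tau-1)^2=1$, i.e.\ $\tau=\frac{8}{5}$, encodes. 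Hence $G_\tau$ contains a $7$-clique throughout $1\le\tau<2$, no $6$-palette pattern (brick, shifted, or otherwise) can make every footprint rainbow, and the $\rho=6$ case cannot be established by this route. For $\rho=7$ you must exhibit a $7$-coloring of a graph whose clique number is already $7$; a short check shows no index-$7$ lattice coloring $(ai+bj)\bmod 7$ avoids all conflict offsets, so a non-lattice pattern would be needed and its existence is not clear. You should either prove lower bounds forcing these rows of the table upward, or find an argument that does not color disks by owner cell; note that the paper's proof silently skips exactly this step.
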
 
 \begin{proof}
    We prove this by verifying the number of grid cells a unit disk can maximally intersect while considering each case. This determines the number of disjoint color sets, each consisting of at most $k$ distinct colors. Clearly, this number is the same as $\rho$. Note that we are interested in finding the upper bound of such intersections.
    \begin{figure}[h]
\centering
\begin{center}
\includegraphics[scale=0.9, angle =-90]{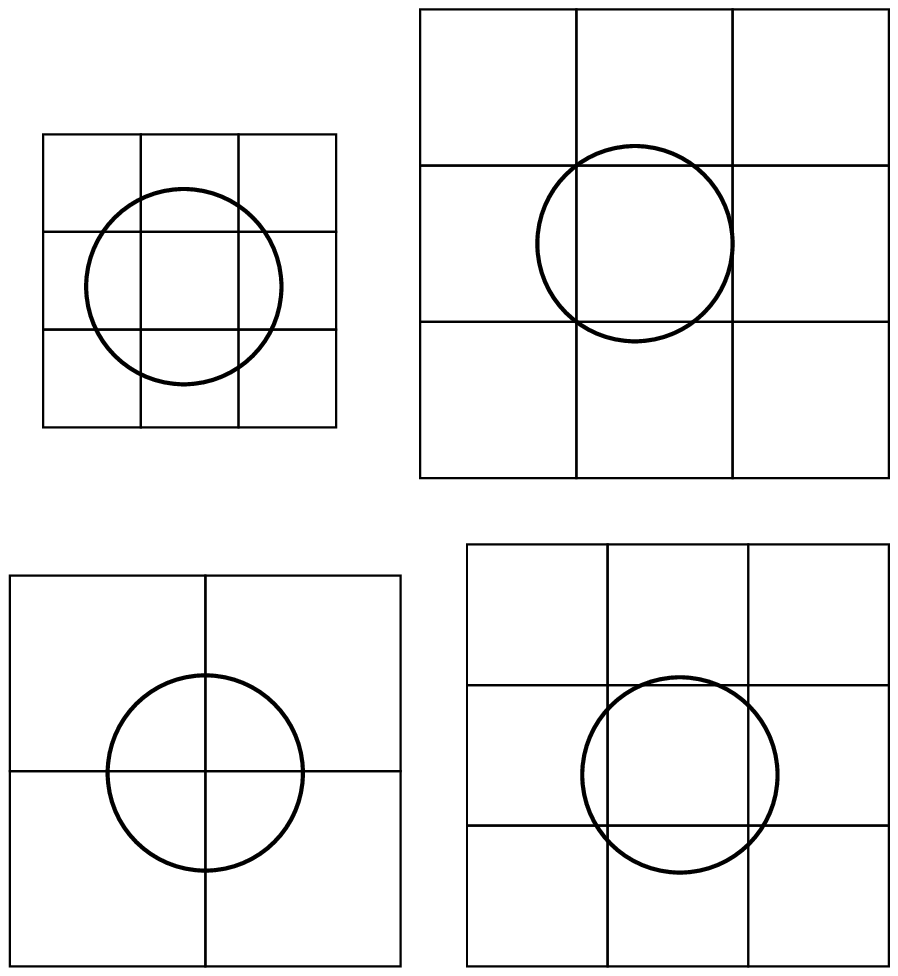}
\end{center}
\begin{picture}(0,0)
\put(  -75, 0){$(c)$}
\put(  68, 0){$(d)$}
\put(  -75, 128){$(a)$}
\put(  58, 145){$(b)$}
\end{picture}
\caption {Proof of Lemma \ref{lemma4}}
\label{fig-lemma4}
\end{figure}

\begin{mycases}
 \case $\tau\geq2 $: Since the diameter of each disk is 2 and the width of grid cell is also ($\geq$)2, certainly, no disk can span more than 2 linearly adjacent disks (see Fig. \ref{fig-lemma4}$(a)$). Thus, maximal intersection count is achieved by placing the disk in any of the grid intersection corners. Here the disk will certainly intersect $4$ grid cells regardless of the width. 
 
 \case $\frac{8}{5}\leq \tau < 2$ : If the width of the grid cell is less than 2, surely, a unit disk can span 3 linearly adjacent grid cells. However, if the width is $\frac{8}{5}$, the disk cannot span more than 2 diagonally adjacent grid cells (see Fig. \ref{fig-lemma4}$(d)$). Hence, if the width is greater than or equal to $\frac{8}{5}$, then no matter where the disk is centered (in Fig. \ref{fig-lemma4}$(d)$), it can not intersect more than 6 grid cells simultaneously.

 \case $\sqrt{2}\leq \tau < \frac{8}{5}$ : If the width of the grid cell is $\sqrt{2}$, surely, a unit disk can span 3 linearly adjacent grid cells and the middle cell from the next adjacent column or row of the linearly adjacent grid cells (see Fig. \ref{fig-lemma4}$(c)$).  Thus, we use the position shown in Fig. \ref{fig-lemma4}$(c)$ to indicate maximal count possible. 

 \case $1 \leq \tau < \sqrt{2}$ : For width $\geq 1$, a unit disk could potentially intersect 3 linearly and diagonally adjacent grid cells (see Fig. \ref{fig-lemma4}$(b)$). 
 \end{mycases}


 We do not consider the grid partitioning with grid cells of width $\tau < 1$ as it is inconsequential to our study and provides no substantial results (as the colorability increases substantially with no real improvement to the number of intersecting disks).
\end{proof}

 In the same spirit, we provide a generalization of the observation presented in Lemma \ref{lemma4} for the maximum number of pairwise disjoint unit disks that can intersect a square of size $\tau\times \tau$. We define a parameter $\alpha$ that is the count of mutually non-intersecting unit disks that can intersect a grid cell of width $\tau$.

 \begin{lemma}\label{lemma5}
    The bound on the cardinality of a set $S_{\cal C}^i$ of pairwise disjoint unit disks that can intersect a grid cell ${\cal C}$ of size  $\tau\times\tau$ is $\alpha$, for an integer $1\leq i \leq k$, where
    
    $\alpha = \begin{dcases*}
    4 & \hspace{8mm}if $\tau = 1$\\
    5 & \hspace{8mm}if $\frac{8}{5} \geq \tau \geq \sqrt{2}$\\
    7 & \hspace{8mm}if $\tau = 2$\\
    10 & \hspace{8mm}if $\tau = 3$\\
    14 & \hspace{8mm}if $\tau = 4$\\
    17 & \hspace{8mm}if $\tau = 5$
    \end{dcases*}$
  \end{lemma}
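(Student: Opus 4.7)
The plan is to extend the case analysis of Observation~\ref{observation1.1} and Lemma~\ref{lemma-1.1} to cover each of the six ranges of $\tau$ listed. Throughout, the setup is unchanged: a unit disk intersects the cell $\mathcal{C}$ of side $\tau$ iff its center lies in the rounded square $\mathcal{C}\oplus B(0,1)$ of side $\tau+2$, and pairwise disjoint unit disks correspond to points in this region at mutual distance at least $2$. For each value of $\tau$ I would establish matching lower and upper bounds on the size of such a point set.

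For the lower bound, I plan to exhibit an explicit symmetric configuration in each case. For $\tau=1$, four disks placed just outside the four edge-midpoints, each tangent to one edge, have pairwise distance $(\tau/2+1)\sqrt{2}\geq 2$. For $\sqrt{2}\leq\tau\leq 8/5$, I would place four disks on the outward diagonals at distance $1$ from each corner of $\mathcal{C}$, together with a fifth disk centered at the center of $\mathcal{C}$; the center-to-corner-disk distance is $\tau/\sqrt{2}+1$, which first reaches $2$ exactly at $\tau=\sqrt{2}$. For $\tau=2$, I would reuse the seven-disk configuration of Lemma~\ref{lemma-1.1}. For $\tau\in\{3,4,5\}$, I would combine a boundary ring of disks placed at the corners and spaced at distance $2$ along each of the four edges with an interior sub-grid of disks spaced at distance exactly $2$; a direct count yields $10$, $14$, and $17$ respectively, matching the even/odd formulas inherited from Observation~\ref{observation1.1}.

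For the upper bound I would decompose $\mathcal{C}\oplus B(0,1)$ into the interior square $\mathcal{C}$, four edge bands of dimensions $\tau\times 1$, and four quarter-disk corner pieces, and bound the number of centers in each region separately using the pairwise distance-$\geq 2$ constraint. Along each edge band at most $\lceil\tau/2\rceil+1$ centers fit; the interior admits at most roughly $\lceil\tau/2\rceil^{2}$ centers; and each quarter-disk corner piece contains at most one center. Summing over the four sides while accounting for corner-sharing reproduces the claimed bound $\alpha$ for $\tau\in\{2,3,4,5\}$. For $\tau\in\{1\}\cup[\sqrt{2},8/5]$ a more direct argument is needed, since the enlarged region is too small for a clean edge/interior split.

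The main obstacle will be tightly matching the lower and upper bounds at the transition thresholds $\tau=1$ and $\tau=\sqrt{2}$, where one additional disk becomes geometrically compatible with the existing four-disk configuration. For these sub-cases I expect to rely on trigonometric bounds involving the chord length $\sqrt{4-\tau^{2}}$ between two unit disks whose centers lie on opposite edges of $\mathcal{C}$, together with the center-to-corner-disk distance used above. A secondary difficulty is ruling out hybrid configurations for $\tau=3$, where a naive sum of boundary and interior contributions overshoots the tight value $10$ and a finer inclusion-exclusion argument on the corner quarter-disks is required to sharpen the count.
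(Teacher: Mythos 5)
Your overall strategy --- explicit extremal configurations for the lower bounds plus a region-decomposition count for the upper bounds, anchored on a four-disk corner construction at $\tau=1$ and on Lemma~\ref{lemma-1.1} at $\tau=2$ --- is the same one the paper uses; indeed the paper's own proof is sketchier than your plan, working out only $\tau=1$ explicitly and deferring every other case to ``similar'' arguments illustrated by figures. Your lower-bound constructions check out: for instance the fifth disk at the center of ${\cal C}$ is at distance $\tau/\sqrt{2}+1\geq 2$ from each corner disk exactly when $\tau\geq\sqrt{2}$, which is the right threshold.

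There are, however, two concrete problems on the upper-bound side. First, the values $10$ and $17$ for $\tau=3$ and $\tau=5$ do \emph{not} ``match the even/odd formulas inherited from Observation~\ref{observation1.1}'': the odd-$\tau$ formula $4\lceil\tau/2\rceil+4+(\tau/2)^2$ gives roughly $14$ and $22$ there, and the paper itself remarks at the end of this proof that its odd-$\tau$ bound is loose. So those two entries cannot be inherited and must be proved from scratch; you notice this for $\tau=3$ but not for $\tau=5$. Second, your proposed per-region counts ($\lceil\tau/2\rceil+1$ centers per edge band, $\lceil\tau/2\rceil^{2}$ in the interior, one per corner piece) sum to strictly more than $\alpha$ in every case --- already $4\cdot 2+1+4=13$ against the claimed $7$ at $\tau=2$ --- so the entire burden of the upper bound rests on the unspecified ``accounting for corner-sharing.'' That interaction (a center near a corner simultaneously blocking two edge bands and a corner piece) is precisely the hard part, and without quantifying it the upper-bound half of the lemma is not closed. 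To be fair, the paper's Observation~\ref{observation1.1} has the same character, enumerating symmetric configurations rather than excluding all others, so your proposal is no less rigorous than the source; but as written it is a plan with the decisive step still missing.
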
  
 \begin{proof}
 
  Consider a grid cell ${\cal C}$ of size $1\times 1$. Imagine placing four unit disks, each centered farthest apart from one another, but outside the cell ${\cal C}$, and touching one of the four corners of the cell ${\cal C}$. Since the cell size is $1\times 1$, we can not place any more disk that intersects the cell, but at the same time disjoint from each of these four disks. Hence, the bound $\alpha=4$ for the case of width $\tau = 1$. Proof for each of the other cases can be done similar to the proofs provided for the cases $\tau=1$ given above, and $\tau=2$ given in Lemma \ref{lemma-1.1} (see, for example, the cases $\tau = 1$ and $\tau=8/5$ being illustrated in Fig. \ref{fig-lemma5}).
   \begin{figure}[h]
\centering
\begin{center}
\includegraphics[scale=0.9, angle =0]{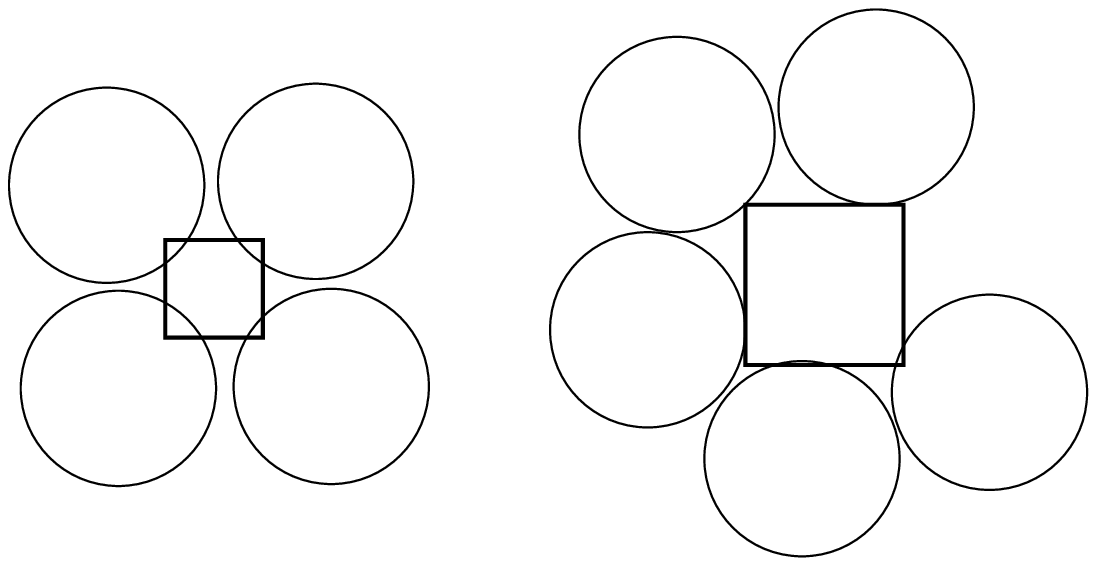}
\end{center}
\begin{picture}(0,0)
\put(  -100, 10){$(a)$ $\tau=1$}
\put(  58, 10){$(b)$ $\tau=\frac{8}{5}$}
\end{picture}
\caption {Proof of Lemma \ref{lemma5}}
\label{fig-lemma5}
\end{figure}
  We do not consider grid cells of width $\tau>5$ as it provides no improvement to the running time while the approximation factor stays the same after $\tau=2$ (as seen in Lemma \ref{lemma4}).  On the other hand, $\tau< 1$ is also not considered because the approximation factor $\rho$ becomes arbitrarily very high for $\tau< 1$. One can observe here that the upper bound for $|S_{\cal C}^i|$ obtained in Observation \ref{observation1.1} is tight in case of $\tau$ being even, whereas in case of odd, it is a loosely bound.
  \end{proof}


 \begin{theorem} \label{theorem-4}
There exists a $\rho$-approximation algorithm to solve the $k$-{\it CDUDC} problem, that has a runing time of $O(m^{\alpha k}n\log k)$ for a given grid width $1 \leq \tau \leq 5$, where

    $\rho = \begin{dcases*}
    4 & \hspace{2mm}if $\tau\geq2$ \\
    6 & \hspace{2mm}if $\frac{8}{5}\leq \tau < 2$\\
    7 & \hspace{2mm}if $\sqrt{2}\leq \tau < \frac{8}{5}$\\
    9 & \hspace{2mm}if $1 \leq \tau < \sqrt{2}$
    \end{dcases*}$ 
    , \quad
    $\alpha = \begin{dcases*}
    4 & \hspace{2mm}if $\tau = 1$\\
    5 & \hspace{2mm}if $\frac{8}{5} \geq \tau \geq \sqrt{2}$\\
    7 & \hspace{2mm}if $\tau = 2$\\
    10 & \hspace{2mm}if $\tau = 3$\\
    14 & \hspace{2mm}if $\tau = 4$\\
    17 & \hspace{2mm}if $\tau = 5$
    \end{dcases*}$    
    
\end{theorem}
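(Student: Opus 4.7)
The plan is to mirror the proof of Theorem \ref{theorem-2.2} step by step, but with the grid width $\tau$ left as a parameter in the interval $[1,5]$, and with all constants replaced by the bounds supplied by Lemmas \ref{lemma4} and \ref{lemma5}. Concretely, I would reuse the overall structure of Algorithm \ref{4.alg:SSC}: partition the bounding rectangle $\cal R$ into a grid $\cal G$ of cells of size $\tau\times \tau$, iterate over nonempty cells in the bottom-left-to-top-right order, and for each cell $\cal C$ search exhaustively for a $k$-colorable cover of $P\cap \cal C$ drawn from the disks of $D$ that intersect $\cal C$. The only difference at the cell level is that, by Lemma \ref{lemma5}, any pairwise-disjoint subset of disks intersecting $\cal C$ has cardinality at most $\alpha(\tau)$, so the enumeration over candidate $S_{\cal C}^i$'s produces $O(m^\alpha)$ subsets rather than $O(m^7)$.

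For the running-time bound I would then essentially copy the argument from Theorem \ref{theorem-2.2}. For each cell $\cal C$ we enumerate all $\binom{O(m^\alpha)}{k}$ collections of $k$ disjoint subsets whose union has at most $\alpha k$ disks, build the Voronoi diagram on the $\alpha k$ centers in $O(k\log k)$ time, and test coverage of the $n_{\cal C}=|P\cap \cal C|$ points by point-location queries in $O(n_{\cal C}\log k)$ time. Summing $O(m^{\alpha k}(k\log k + n_{\cal C}\log k))$ over the $O(n)$ nonempty cells, together with the preprocessing, gives the claimed $O(m^{\alpha k} n\log k)$ bound under $k\ll n$.

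The step that requires genuine attention (and the main obstacle) is the generalization of the coloring scheme (lines 16--28 of Algorithm \ref{4.alg:SSC}). The key invariant I would need is: whenever a disk $d$ centered in cell $\cal C$ is retained by $\cal C$ (i.e.\ not handed over by the rule at line 11), the color set assigned to $\cal C$ must be disjoint from the color set of every other cell $\cal C'$ that $d$ could also be picked up by. By Lemma \ref{lemma4} a unit disk meets at most $\rho(\tau)$ cells, so the requirement is to exhibit, for each value of $\tau$ in the four ranges, a periodic $\rho$-coloring of the grid cells such that any $\rho$-cell cluster reachable by one unit disk receives all $\rho$ distinct color sets. For $\tau\ge 2$ this is the familiar $(i\bmod 2, j\bmod 2)$ pattern of Algorithm \ref{4.alg:SSC}. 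For the ranges $\frac{8}{5}\le \tau<2$, $\sqrt{2}\le \tau<\frac{8}{5}$, and $1\le \tau<\sqrt{2}$ I would construct periodic patterns of periods $2\times 3$, $2\times 3$ (with the diagonal neighbor also counted), and $3\times 3$ respectively, using $\rho=6,7,9$ color sets, and verify by checking the picture in Figure \ref{fig-lemma4} that every cluster of cells a single disk can touch receives pairwise distinct color sets.

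With the coloring scheme in place, the handover logic (line 11) transfers verbatim: in the fixed iteration order, a disk whose center lies in a cell with a strictly larger ID than $\cal C$ is removed from $S_{\cal C}^i$ and added to the handover set of its own cell, so it will inherit that cell's color set rather than $\cal C$'s. Conflict-freeness then follows by exactly the argument of Lemma \ref{lemma-conflict}: two overlapping disks $d,d'\in D'$ have centers at distance $\le 2$, so their home cells are among the $\rho$ mutually close cells covered by Lemma \ref{lemma4}, and the handover rule together with the disjointness of the $\rho$ color sets forces $\chi(d)\neq \chi(d')$. Feasibility of the exhaustive search at each cell is guaranteed by Lemma \ref{lemma5} applied $k$ times, exactly as in Lemma \ref{lemma-1.1}. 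Combining these ingredients yields a $\rho(\tau)$-approximation algorithm running in $O(m^{\alpha(\tau) k} n\log k)$ time, which is the statement of Theorem \ref{theorem-4}.
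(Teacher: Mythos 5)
Your overall plan---rerun Algorithm \ref{4.alg:SSC} with cell width $\tau$, bound the per-cell enumeration by $\alpha$ via Lemma \ref{lemma5}, and bound the number of color sets by $\rho$ via Lemma \ref{lemma4}---is exactly the paper's approach; the paper's own proof of Theorem \ref{theorem-4} consists of little more than the assertion that Lemmata \ref{lemma4} and \ref{lemma5} determine $\rho$ and $\alpha$, so your running-time analysis and your recognition that the coloring scheme of lines 16--28 must be explicitly generalized are genuine additions. The running-time half of your argument is fine.

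The gap is in the step you yourself flag as the main obstacle, and it lies in the invariant you chose to verify. After the handover rule, a disk is colored with the palette of the cell containing its center, so conflict-freeness requires that any two cells that can each contain the center of one of two overlapping unit disks receive disjoint palettes; since overlapping unit disks have centers at distance less than $2$, this means any two cells at distance less than $2$ must get disjoint palettes. This is strictly stronger than your stated criterion that every cluster of cells a single unit disk can touch receives distinct palettes (which is also the quantity Lemma \ref{lemma4} actually bounds): a family of cells that are pairwise at distance less than $2$ need not all be met by one disk. Concretely, for $\frac{8}{5}\leq\tau<2$ take the six cells of a $3\times 2$ block together with the cell sitting above the middle cell of its top row; any two of these seven cells are at distance at most $\tau<2$, so all seven must receive pairwise disjoint palettes, yet no single unit disk meets all seven once $\tau>\frac{8}{5}$. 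Hence no $6$-palette assignment---and in particular your $2\times 3$-periodic pattern, which repeats a palette at vertical offset $(0,2)$ while such cells are only $\tau<2$ apart---can be conflict-free; the same obstruction (a $7$-clique, and in fact a packing-density argument forcing at least $8$ palettes) applies to the range $\sqrt{2}\leq\tau<\frac{8}{5}$ as well. The cases $\tau\geq 2$ and $1\leq\tau<\sqrt{2}$ do go through with the $2\times2$ and $3\times3$ patterns. So the coloring constructions cannot be supplied as you describe; this defect is inherited from, rather than introduced into, the paper's Lemma \ref{lemma4}, but your proof is incomplete at exactly this point.
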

\begin{proof}
The input of Algorithm \ref{4.alg:SSC}, in addition to a set $P$ of $n$ points, a set $D$ of $m$ unit disks, and an integer $k (>0)$, also consists of a grid partitioning parameter $\tau$. From Lemmata \ref{lemma4} and \ref{lemma5}, it is clear that any reasonable value for the parameter $\tau$ will imply the values of $\rho$ and $\alpha$. Hence,  we have a $\rho$-approximation algorithm in $O(m^{\alpha k}n\log k)$ time. 
\end{proof}
 

 \subsection{Further Generalization}\label{fgeneralize2.3}
 
 In this subsection, as in Theorem \ref{theorem-4} we further attempt to generalize the packing constraints that define the values of $\rho$ and $\alpha$ our stated algorithm might take for a given value of $\tau$. We attempt to bring an upper bound to the values of $\alpha$ and $\rho$ and hence an approximate algorithm with complexity in terms of $\tau$.
 
 The density of a packing of two or more objects in the interior of any region $R$ is the ratio between the area of the union of these objects and the total area of $R$. Then, we have the following packing lemmas.
 
 \begin{lemma}[Fejes T{\'o}th, \cite{FT53}] 
    \label{packing-lemma-1}
    Every packing of two or more congruent disks in a convex region has density at most $\frac{\pi}{\sqrt{12}}$.
\end{lemma}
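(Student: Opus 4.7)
The plan is to use a Voronoi-type decomposition of the convex region $R$ together with the classical fact that a convex polygon containing a disk and having every side at distance at least one radius from the center has area at least that of the regular hexagon circumscribing the disk. First I would normalize by assuming all disks have a common radius $r$, so that the circumscribed regular hexagon has area $2\sqrt{3}\,r^{2}$ while a disk itself has area $\pi r^{2}$. The ratio $\pi r^{2}/(2\sqrt{3}\,r^{2})=\pi/\sqrt{12}$ is exactly the target density, so the lemma reduces to showing $\mathrm{area}(R)\ge n\cdot 2\sqrt{3}\,r^{2}$, where $n$ is the number of disks.

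Next, for each disk center $c_{i}$ I would form its Voronoi cell $V_{i}$, clipped to $R$. Because the disks are interior-disjoint, any two centers are at distance at least $2r$, so every perpendicular-bisector edge of $V_{i}$ lies at distance at least $r$ from $c_{i}$; in particular the entire disk sits inside $V_{i}$. The cells $\{V_{i}\}$ form a partition of $R$, so $\mathrm{area}(R)=\sum_{i}\mathrm{area}(V_{i})$, and it suffices to prove the pointwise bound $\mathrm{area}(V_{i})\ge 2\sqrt{3}\,r^{2}$ for each $i$.

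The pointwise bound comes from a Dowker-type hexagonal isoperimetric inequality: among all convex polygons containing a disk of radius $r$ with every edge at distance at least $r$ from the center, the regular circumscribing hexagon has the smallest area. I would sketch this by the standard edge-swapping / symmetrization argument, first reducing to polygons with at most six sides (extra sides can only increase the area) and then perturbing two adjacent tangent edges at a time toward the symmetric configuration while keeping the disk inside and the area nonincreasing, ending at the regular hexagon.

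The main obstacle is the boundary of $R$. Cells that meet $\partial R$ are truncated, and the hexagonal lower bound is not automatic for truncated cells. The key is to use convexity of $R$: each boundary chord of a $V_{i}$ is a line along which $R$ lies on one side, and since the disk of radius $r$ around $c_{i}$ lies entirely in $R$, that chord is already at distance at least $r$ from $c_{i}$. One can therefore replace each boundary chord of $V_{i}$ by the full supporting halfplane at distance $r$ from $c_{i}$, enlarging $V_{i}$ without creating overlap with the neighbouring enlarged cells (convexity of $R$ is exactly what guarantees non-overlap), and then apply the hexagonal lower bound to the enlarged cells. Summing yields $\mathrm{area}(R)\ge\sum_{i}\mathrm{area}(V_{i})\ge 2\sqrt{3}\,n r^{2}$, and dividing by $\mathrm{area}(R)$ gives the density bound $\pi/\sqrt{12}$. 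This boundary-truncation step, needed to obtain the sharp constant, is the delicate part that separates Fejes T\'oth's theorem from the elementary whole-plane packing statement.
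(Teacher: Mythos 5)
First, note that the paper does not prove this lemma at all: it is quoted verbatim as a classical theorem from Fejes T\'oth's 1953 monograph \emph{Lagerungen in der Ebene, auf der Kugel und im Raum}, so there is no in-paper argument to compare against. Judged on its own terms, your sketch follows the standard Dirichlet-cell strategy, but it contains two genuine gaps. The first is in the ``hexagonal isoperimetric inequality'' you invoke. The condition ``convex polygon containing a disk of radius $r$ with every edge at distance at least $r$ from the center'' does \emph{not} imply area at least $2\sqrt{3}\,r^{2}$: a regular $12$-gon with apothem $r$ satisfies the condition and has area $12r^{2}\tan(15^{\circ})\approx 3.22\,r^{2}<2\sqrt{3}\,r^{2}$, and the intersection of \emph{all} halfplanes at distance $r$ is the disk itself, of area $\pi r^{2}$. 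Your reduction step ``extra sides can only increase the area'' is exactly backwards --- adding tangent halfplanes cuts corners and decreases area. The true lemma holds only for polygons with at most six sides, and bounding the number of relevant sides of a Dirichlet cell is where the packing hypothesis (the neighbouring centers are at mutual distance $\ge 2r$ from each other, not just from $c_i$) must be used; that argument is absent from the sketch.

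The second gap is the boundary step, which you correctly identify as the crux but do not actually resolve. After enlarging the clipped cells $V_i$ to cells $\widetilde{V}_i$ bounded by supporting halfplanes of $R$, the $\widetilde{V}_i$ are no longer contained in $R$, so the identity $\mathrm{area}(R)=\sum_i\mathrm{area}(V_i)$ no longer chains with the lower bound $\mathrm{area}(\widetilde{V}_i)\ge 2\sqrt{3}\,r^{2}$: you end up proving a lower bound on the total area of regions that overflow $R$, which says nothing about $\mathrm{area}(R)$. And the un-enlarged clipped cells genuinely can have area below $2\sqrt{3}\,r^{2}$ (e.g.\ a cell pinched between $\partial R$ and several neighbours), which is precisely why the convex-container statement is harder than Thue's whole-plane theorem. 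Fejes T\'oth's actual proof, and the later Oler--Groemer refinements, control the boundary cells by trading area against a perimeter term of the container (or of the convex hull of the centers) rather than by the pointwise cell bound; some such global accounting is needed, and without it the argument does not close.
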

 
  \begin{lemma}
    \label{packing-lemma-2}
    Consider a unit ball $b$ in the space $\mathbb{R}^d$, and consider any collection $X$ of pairwise disjoint hydercubes of side lendths at least $\tau$ that overlap $b$, then 
  $|X|\leq  \left(1+\lceil\frac{2}{\tau}\rceil\right)^d$.
\end{lemma}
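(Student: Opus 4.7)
The plan is to reduce the problem to pairwise disjoint axis-aligned cubes of side exactly $\tau$, and then count them by a grid-and-pigeonhole argument on their centers.

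First, for each $C \in X$ with side length $s_C \geq \tau$, pick any $p_C \in C \cap b$ (nonempty by hypothesis) and select an axis-aligned sub-cube $C' \subseteq C$ of side exactly $\tau$ that still contains $p_C$. Such a $C'$ exists because $C$ has side at least $\tau$ in every coordinate, so one can translate a $\tau$-cube inside $C$ to include any prescribed point of $C$. The family $\{C'\}_{C \in X}$ is pairwise disjoint (as subsets of pairwise disjoint sets), each $C'$ is a $\tau$-cube meeting $b$, and the cardinality of the family equals $|X|$. Hence it suffices to bound any collection of pairwise disjoint axis-aligned cubes of side exactly $\tau$, each intersecting $b$.

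Next, I would bound the locations of the centers. Let $c_C$ denote the center of $C'$. Since $C' \cap b \neq \emptyset$, some $p \in C' \cap b$ satisfies $\|p\|_{\infty} \leq \|p\|_{2} \leq 1$ and $\|p - c_C\|_{\infty} \leq \tau/2$, so by the triangle inequality $c_C$ lies in the axis-aligned box $[-1-\tau/2,\, 1+\tau/2]^d$ of side $2+\tau$. I then partition this box into half-open axis-aligned grid cells of side $\tau$; the number of cells per coordinate is $\lceil (2+\tau)/\tau\rceil = 1 + \lceil 2/\tau\rceil$, giving $(1+\lceil 2/\tau\rceil)^d$ cells overall.

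The crucial observation is that each grid cell contains at most one center. Indeed, if two distinct centers $c_{C_1}, c_{C_2}$ lay in the same half-open $\tau$-cell, then $\|c_{C_1} - c_{C_2}\|_{\infty} < \tau$, which forces the two $\tau$-cubes centered at them to overlap with positive volume, contradicting pairwise disjointness. A pigeonhole count then yields $|X| \leq (1+\lceil 2/\tau\rceil)^d$. The main delicate point is the initial reduction step (the only place where the hypothesis $s_C \geq \tau$ is used) together with the boundary conventions for the half-open grid cells and for ``disjoint''; once these are fixed, the remaining argument is a routine pigeonhole bound.
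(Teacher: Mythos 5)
The paper states Lemma~\ref{packing-lemma-2} without any proof, so there is no argument of the authors' to compare yours against; your write-up actually supplies the missing justification. Your reduction-plus-pigeonhole argument is correct for axis-aligned hypercubes: shrinking each cube to a $\tau$-subcube through a chosen point of $C\cap b$ preserves disjointness and the property of meeting $b$; the centers then land in the box $[-1-\tau/2,\,1+\tau/2]^d$ of side $2+\tau$; the per-coordinate cell count $\lceil(2+\tau)/\tau\rceil=1+\lceil 2/\tau\rceil$ is right; and two centers in one half-open $\tau$-cell would force their $\tau$-cubes to share a box of positive volume, so each cell holds at most one center. Two caveats, both minor. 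First, your argument tacitly assumes the hypercubes are axis-aligned (both in extracting an axis-aligned $\tau$-subcube and in the $\ell_\infty$ overlap step, which fails for rotated cubes); this matches the paper's only use of the lemma, where the hypercubes are grid cells, but it is worth stating explicitly since the lemma as written does not say so. Second, the boundary issue you flag is real in the degenerate case where $2/\tau$ is an integer and a cube touches $b$ at a single point of the sphere, pushing a center onto the closed box's boundary; this is resolved either by adopting the convention that ``overlap'' means intersection of interiors, or by noting that a center can be extremal in at most one coordinate so the grid can be shifted to absorb it. Neither issue affects the bound in the paper's application, so the proof stands as a valid and self-contained argument for the stated inequality.
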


  \begin{theorem} \label{theorem-5}
There exists a $\rho$-approximation algorithm to solve the $k$-{\it CDUDC} problem, that has a runing time of $O(m^{\alpha k}n\log k)$ for a given grid width $1 \leq \tau \leq 2$, where $\rho=\left(1+\lceil\frac{2}{\tau}\rceil\right)^2$ and $\alpha=\frac{4\pi+8\tau+\tau^2}{\sqrt{12}}$.
\end{theorem}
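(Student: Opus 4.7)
My strategy is to reuse Algorithm \ref{4.alg:SSC} verbatim, but with the grid width $\tau$ treated as a free parameter in $[1,2]$, and then substitute two new packing-based bounds into the analysis that drove Theorem \ref{theorem-4}. Exactly as there, $\rho$ will be the number of disjoint palettes of $k$ colors that the handover/coloring scheme must maintain, i.e.\ an upper bound on the number of grid cells that any one unit disk can meet; and $\alpha$ will be the upper bound on the cardinality of any pairwise-disjoint family of unit disks intersecting a single cell ${\cal C}$ of size $\tau\times\tau$. Once both numbers are established in terms of $\tau$, the runtime $O(m^{\alpha k}n\log k)$ and the $\rho$-approximation guarantee will follow from the proofs of Theorems \ref{theorem-2.2} and \ref{theorem-4} with no further modification.

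For the factor $\rho$ I will apply Lemma \ref{packing-lemma-2} in dimension $d=2$ to a unit disk and the collection of pairwise disjoint axis-aligned squares of side $\tau$ that make up the grid. Any unit disk can overlap at most $(1+\lceil 2/\tau\rceil)^2$ such squares, which is exactly the claimed $\rho$, so the coloring scheme indeed needs at most this many disjoint palettes of $k$ colors and the analogue of Lemma \ref{lemma-conflict} keeps the global coloring conflict-free.

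The main work is the new bound on $\alpha$, where I will use Fejes T\'oth's packing lemma (Lemma \ref{packing-lemma-1}). Let $R_\tau={\cal C}\oplus B(0,2)$ be the Minkowski sum of the cell ${\cal C}$ with a closed disk of radius $2$. The key observation is that any unit disk $d$ meeting ${\cal C}$ sits entirely inside $R_\tau$: the centre of $d$ lies within Euclidean distance $1$ of ${\cal C}$, so every point of $d$ is within distance $1+1=2$ of ${\cal C}$. Being a Minkowski sum of two convex sets, $R_\tau$ is convex, and a routine area computation (central square $\tau^2$, four boundary rectangles of total area $8\tau$, and four corner quarter-disks assembling into a full disk of area $4\pi$) yields
\[
\mathrm{Area}(R_\tau)=\tau^2+8\tau+4\pi.
\]
Any set $S$ of pairwise disjoint unit disks all meeting ${\cal C}$ is therefore a packing of congruent disks in the convex region $R_\tau$, so Lemma \ref{packing-lemma-1} gives
\[
|S|\cdot\pi \;\leq\; \frac{\pi}{\sqrt{12}}\bigl(\tau^2+8\tau+4\pi\bigr),
\]
which rearranges to $|S|\leq (4\pi+8\tau+\tau^2)/\sqrt{12}$, the claimed value of $\alpha$.

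With these two packing estimates in hand, the algorithm enumerates, for each cell, all $O(m^{\alpha})$ pairwise-disjoint subcollections of size at most $\alpha$ and chooses $k$ of them to form a $k$-colorable local cover, yielding total running time $O(m^{\alpha k}n\log k)$ as in Theorem \ref{theorem-2.2}; the generalized handover/coloring argument with $\rho$ palettes then certifies that the output is conflict-free and uses at most $\rho k$ colors, giving the $\rho$-approximation. The delicate step is the $\alpha$ argument: one must be careful that the containing region $R_\tau$ really is convex so that Fejes T\'oth applies, and that the Minkowski radius is exactly $2$ (an off-by-one there would corrupt both the $8\tau$ coefficient and the $4\pi$ constant in the final formula for $\alpha$).
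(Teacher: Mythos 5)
Your proposal is correct and follows essentially the same route as the paper: both bound $\alpha$ by applying Fejes T\'oth's packing lemma to the convex Minkowski sum ${\cal C}\oplus B(0,2)$ of area $\tau^2+8\tau+4\pi$, and both obtain $\rho$ from Lemma \ref{packing-lemma-2} (the $(1+\lceil 2/\tau\rceil)^2$ bound on grid cells a unit disk can meet), then plug these into the machinery of Theorems \ref{theorem-2.2} and \ref{theorem-4}. The only content of the paper's proof you omit is the side discussion justifying the restriction $1\leq\tau\leq 2$, which is not needed since the theorem assumes that range.
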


\begin{proof}

The region $R$ that encloses all pairwise disjoint unit disks that participate in $k$-colorable cover of points lying in a grid cell ${\cal C}$ of size $\tau \times \tau$ is the Minkowski sum of ${\cal C}$ and a disk of radius 2. The total area of this region $R$ is $\tau^2+4\pi+8\tau$. It is obvious that this region is convex. Now, using Lemma \ref{packing-lemma-1} we can bound the the maximum number of pairwise disjoint unit disks that can cover points in ${\cal C}$ as follows. The density of packing $\alpha$ unit disks in the interior of $R$ is $\frac{\alpha \pi}{\tau^2+4\pi+8\tau}\leq \frac{\pi}{\sqrt{12}}$.
Therefore, $\alpha$ as referred to in the previous sections can be said to be bounded by $O(\tau^{2})$. For getting a bound on $\rho$ we try to enclose as many squares as possible along the diameter vertically and horizontally. As many as $\frac{2}{\tau}$ + 1 squares can be arranged along with one of the diagonals. Therefore arranging the squares as grid by setting an upper value for number of squares in rows and columns, at most $(\frac{2}{\tau} + 1)^{2}$ which is $\frac{4}{\tau^{2}}$ + $\frac{4}{\tau}$ +1 (as in Lemma \ref{packing-lemma-2}). Since $\frac{4}{\tau}$ would be the dominating term here for values of $\tau \geq 1$, hence it can be said that $\rho$ is bounded by $O(\frac{1}{\tau})$. Substituting the bounds of $\alpha$ and $\tau$ in Theorem \ref{theorem-4}, the algorithm is a $O(\frac{1}{\tau})$-approximation algorithm with a running time of $O(m^{\alpha k}n\log k)$, where $\alpha=\frac{4\pi+8\tau+\tau^2}{\sqrt{12}}$.

Now we argue that $\tau$ should be chosen such that $1 \leq \tau \leq 2$. To provide an appropriate argument for our chosen values of $\tau$ we consider the following 2 cases 
\\
\textbf{Case 1:} Consider cases where $\tau < 1$. The Observation \ref{observation1.1} holds true only for integer values of $\tau$. So for values of $\tau < 1$ it can be observed that there would be at most 4 disjoint disks, by keeping each of the disks at the 4 vertices. Therefore the value of $\alpha$, in this case, is bounded by 4. Hence the running time remains $O(m^{4k}n\log k)$. The approximation factor increases arbitrarily. It turns out that a large number of color sets get used without any improvement in running time. So values of $\tau < 1$ do not provide us with much of an advantage.
\\
\textbf{Case 2:} Consider cases where $\tau > 2$. As it can be observed from Lemma \ref{lemma4}, the approximation factor $\rho$ remains 4 for values of $\tau$ greater than 2, however, the runtime increases a lot since $\alpha$ is bounded by $O(\tau^{2})$. Clearly, there is a huge overhead of running time with no improvement in the approximation factor.

We can infer that values of $\tau$ such that $1 \leq \tau \leq 2$ are preferred. 
\end{proof}

 \section{ Line Segment and Rectangular Region Cover} \label{section-LSDC}
In the same spirit as the $k$-{\it CDUDC} problem is considered due to its practical application in frequency/channel assignment in wireless networks, we also define two problems, that generalize the locations of potential wireless clients from discrete set of points to line segments and from discrete set of points to a continuous rectangular region, namely, the {\it $k$-Colorable Line Segment Disk Cover ($k$-CLSDC)} and {\it $k$-Colorable Rectangular Region Cover ($k$-CRRC)} problems, respectively.

We begin our approach using a fundamental combinatorial result involving unit disks that helps us to transform the above problems into our original $k$-{\it CDUDC} problem. Given a set $D$ of $m$ unit disks in the plane, a sector is the smallest region bordered by the boundary lines of disks and is covered by the same set of disks in $D$. Thus, the arrangement of all disks of $D$ subdivides the plane into many sectors. It is not hard to show that the worst-case complexity of the arrangement of any set of $m$ unit disks is quadratic, as stated below.

\begin{observation}[Funke et al. \cite{FKKLS07}] 
    \label{sector-lemma}
    The number of sectors created by intersection of $m$ unit disks in $D$ is $O(m^2)$.
\end{observation}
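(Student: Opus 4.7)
The plan is to prove the bound via the standard planar arrangement / Euler's formula argument applied to the $m$ bounding circles of the disks. First I would set up the combinatorial planar subdivision whose faces are exactly the sectors: the vertices are the pairwise intersection points of the circles, the edges are the maximal circular arcs cut out on each circle by these intersection points, and the faces are the connected regions of the complement, which by the definition in the paper coincide with the sectors (each face is incident to a fixed subset of disk interiors, namely those that contain it). I would also flag the mild technical point that a circle carrying no intersection points contributes an isolated component to the arrangement; one handles this either by explicitly counting its single arc/face, or by artificially adding a dummy vertex on every such circle so that Euler's formula applies uniformly.

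The next step is to bound vertices and edges. Since any two distinct circles of radius one intersect in at most $2$ points, the total number of vertices satisfies $V \leq 2\binom{m}{2} = O(m^2)$. For the edges, each circle is cut by the others into arcs, with at most $2(m-1)$ intersection points on it, hence at most $2(m-1)$ arcs; summing over all $m$ circles gives $E \leq 2m(m-1) = O(m^2)$ (the isolated circles handled by the dummy-vertex trick contribute only $O(m)$ to both counts).

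Finally, I would invoke Euler's formula for the (possibly disconnected) planar graph induced by the arrangement: $V - E + F = 1 + C$, where $C$ is the number of connected components and $F$ counts all faces including the unbounded one. Since $C \leq m$, rearranging gives $F = E - V + 1 + C = O(m^2)$, and each sector is one of these faces, so the number of sectors is $O(m^2)$ as claimed.

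The main obstacle, such as it is, is bookkeeping rather than mathematics: one must check that the paper's definition of a sector (``smallest region bordered by the boundary lines of disks and covered by the same set of disks'') is exactly the notion of a face of the circle arrangement, and that isolated or tangent circles, as well as concurrent intersections of three or more circles, do not inflate the bound. All of these degeneracies only decrease $V$, $E$, and $F$ relative to the generic case, so the $O(m^2)$ bound is preserved without any additional effort.
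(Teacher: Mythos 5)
Your argument is correct: the vertex and edge counts for the arrangement of the $m$ unit circles are both $O(m^2)$, and Euler's formula for a possibly disconnected planar subdivision then bounds the number of faces (hence sectors) by $O(m^2)$; your handling of isolated circles, tangencies, and concurrent intersections is also sound, since these degeneracies only reduce the counts. Note that the paper itself gives no proof of this observation --- it is imported verbatim from Funke et al.\ \cite{FKKLS07} with only the remark that it ``is not hard to show'' --- so your write-up is the standard arrangement-complexity argument that the cited reference relies on, and it is exactly the right one.
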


To develop approximation algorithms for the $k$-{\it CLSDC} problem, we transform
every instance of $k$-{\it CLSDC} problem into an instance of {$k$-{\it CDUDC} problem as follows. In an instance of $k$-{\it CLSDC} problem, we have a set $D$ of $m$ unit disks covering a finite union of $n$ line segments of arbitrary length with arbitrary orientation, and an integer $k$, The objective, here, is to compute a $k$-colorable cover of all the line segments. We split each of these line segments into slices such that each such slice lies within some sector.  Now, for each subset of slices lying within a single sector, we add one point into the same sector and remove all the slices. This collection of points is referred to as $P'$. Hence, from Observation \ref{sector-lemma} we have that $|P'| =  O(m^2)$. This can be taken as an instance of the $k$-{\it CDUDC} problem, where $P'$ is taken as the input set of points $P$. 

Similarly, we can do a similar transformation for the $k$-{\it CRRC} problem. Here, we have a set $D$ of $m$ unit disks covering a continuous rectangular region ${\cal R}$. Our objective is to compute a 
$k$-colorable set of units disks such that ${\cal R}$ is covered by the union of these disks. As above, we split ${\cal R}$ into $O(m^2)$ sectors, as induced by the union of disks in $D$, and add one point into each sector.

For the $k$-{\it CLSDC} problem, the construction of the set $P'$ of points can be done as follows. We first preprocess the given set $D$ of $m$ unit disks into any reasonable data structure,  e.g., a doubly connected edge list ({\it DCEL}), in $O(m^2)$ time \cite{VSDO00}. We can build the Voronoi diagram $VOR_{D}$ on the center points of disks in $D$ in $O(m\log m)$ time. We store cross pointers between Voronoi cells (or disks) in $VOR_{D}$ and the faces (or sectors) of {\it DCEL} that are contributed by the corresponding disks. For each of the given $n$ line segments, we do point location query on $VOR_{D}$ for the left endpoint of the line segment to determine the disk in which it lies. We then follow the cross pointer to access the sector that contains it. Subsequently, we traverse the adjacent sectors of this sector in {\it DCEL}. As we do, we add points into those sectors ( also, into $P'$) that covers a portion of the line segment and mark the corresponding faces (or sectors) as processed in {\it DCEL}. This step will take $O(n\log m+m^2)$ time. In the case of the $k$-{\it CRRC} problem, the construction of $P'$ takes $O(m^2)$ time as we have to test whether each sector is intersected by ${\cal R}$.

Therefore, we have the following results for the $k$-{\it CLSDC} and $k$-{\it CRRC} problems. 

  \begin{theorem} \label{theorem-6}
We have a $\rho$-approximation algorithm to solve the $k$-{\it CLSDC} problem, that has a runing time of $O(n\log m+m^{\alpha k+2}\log k)$ for a given grid width
$1 \leq \tau \leq 2$, where  $\rho$ and $\alpha$ are defined as in Theorem \ref{theorem-5}.

    
\end{theorem}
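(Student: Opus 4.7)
The plan is to reduce the $k$-CLSDC problem to the $k$-CDUDC problem already handled by Theorem \ref{theorem-5} and then account for the reduction overhead. Given an instance consisting of a set $D$ of $m$ unit disks and a set $S$ of $n$ line segments, I would first construct the arrangement of $D$. By Observation \ref{sector-lemma}, this yields $O(m^2)$ sectors, each having the crucial property that every disk of $D$ either wholly contains the sector or is disjoint from it. For each sector that is crossed by at least one segment of $S$, place a single representative point inside the sector and let $P'$ be the resulting point set, so $|P'| = O(m^2)$.

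The correctness reduces to the observation that any $D' \subseteq D$ covers the representative point of a sector if and only if it covers the entire sector, and in particular all slices of segments lying in that sector. Hence a $k$-colorable $D'$ covers $P'$ if and only if it covers the union of the segments in $S$. Feeding $P'$ as the input point set into the algorithm of Theorem \ref{theorem-5} therefore yields a $\rho$-approximate $k$-colorable cover of the original $k$-CLSDC instance with the same parameters $\rho$ and $\alpha$.

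For the running time I would split the analysis in two. The reduction builds a DCEL for the disk arrangement in $O(m^2)$ time, constructs the Voronoi diagram on the disk centres in $O(m\log m)$ time with cross-pointers into the DCEL, performs $n$ point-location queries (one per left endpoint) in $O(n\log m)$ total time, and then walks across the adjacent sectors of the DCEL while marking processed faces so that every sector touched by a segment receives exactly one point; this walk contributes $O(m^2)$ overall. Summed, the reduction costs $O(n\log m + m^2)$. Running the algorithm of Theorem \ref{theorem-5} on the $O(m^2)$ points of $P'$ then costs $O(m^{\alpha k}\cdot|P'|\cdot\log k) = O(m^{\alpha k+2}\log k)$, giving the claimed total $O(n\log m + m^{\alpha k+2}\log k)$ and inheriting the approximation factor $\rho$.

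The main delicate point will be bounding the traversal step so that it does not blow up with the number of segments: by re-using the marked-face mechanism in the DCEL, each of the $O(m^2)$ sectors is visited at most a constant number of times across all segments, so the traversal really does add only $O(m^2)$ to the $O(n\log m)$ spent on endpoint location. Beyond that, the proof is a direct substitution of $|P'|=O(m^2)$ into the statement of Theorem \ref{theorem-5}.
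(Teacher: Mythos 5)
Your proposal is correct and follows essentially the same route as the paper: the sector-based reduction to $k$-CDUDC with $|P'|=O(m^2)$ representative points (built via the DCEL, Voronoi point location for segment endpoints, and marked-face traversal in $O(n\log m+m^2)$ time) is exactly the transformation the paper describes before the theorem, and the final step is the same substitution of $|P'|=O(m^2)$ into Theorem \ref{theorem-5}. Your explicit remark that a disk covers a sector's representative point iff it covers the whole sector is a welcome (if implicit in the paper) justification of correctness.
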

\begin{proof}
Follows from Theorem \ref{theorem-5} by applying $|P|=|P'|=O(m^2)$, where $O(n\log m)$ is due to the preprocessing of the input before we run the algorithm of Theorem \ref{theorem-4} .
\end{proof}
   \begin{theorem} \label{theorem-6}
We have a $\rho$-approximation algorithm to solve the $k$-{\it CRRC} problem, that has a runing time of $O(m^{\alpha k+2}\log k)$ for a given grid width $1 \leq \tau \leq 2$, where  $\rho$ and $\alpha$ are defined as in Theorem \ref{theorem-5}.
\end{theorem}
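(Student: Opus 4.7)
The plan is to reduce the $k$-CRRC problem to the $k$-CDUDC problem by exploiting the sector decomposition described immediately before the theorem, then invoke Theorem~\ref{theorem-5} on the reduced instance. The essential geometric fact is that a sector induced by the arrangement of $D$ is, by definition, covered by exactly the same subset of disks of $D$ at all of its interior points. Consequently, for any sector $\sigma$ and any disk $d \in D$, either $\sigma \subseteq d$ or $\sigma \cap d = \emptyset$. This is what allows one representative point per sector to faithfully encode the covering requirement of the whole sector.

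First I would construct the arrangement of $D$ as a DCEL in $O(m^2)$ time, which by Observation~\ref{sector-lemma} has $O(m^2)$ sectors. Next I would traverse the DCEL, and for each sector $\sigma$ test whether $\sigma \cap \mathcal{R} \neq \emptyset$; this can be done in total $O(m^2)$ time. For every sector intersecting $\mathcal{R}$, pick an arbitrary point of $\sigma \cap \mathcal{R}$ and add it to $P'$, giving $|P'| = O(m^2)$. I would then feed $(P', D, k)$ into the $\rho$-approximation algorithm guaranteed by Theorem~\ref{theorem-5}.

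For correctness, suppose the algorithm returns $(D', \chi)$ so that every point of $P'$ is covered by some disk of $D'$ and $\chi$ uses at most $\rho k$ colors conflict-free. Take any point $x \in \mathcal{R}$. Then $x$ lies in some sector $\sigma$ with $\sigma \cap \mathcal{R} \neq \emptyset$, and the representative point $p_\sigma \in P'$ lies in the same sector. By the uniform-cover property of sectors, the disk $d \in D'$ covering $p_\sigma$ also covers $x$. Hence $D'$ covers $\mathcal{R}$, and the conflict-free coloring carries over unchanged, yielding a $\rho$-approximate $k$-colorable cover.

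For the running time, the preprocessing costs $O(m^2)$. Invoking Theorem~\ref{theorem-5} with $n = |P'| = O(m^2)$ costs $O(m^{\alpha k} \cdot |P'| \cdot \log k) = O(m^{\alpha k + 2} \log k)$, which dominates, giving the claimed bound. I expect the main obstacle to be a careful justification that a single representative point per sector is sufficient — i.e., that the ``uniform cover'' property of sectors combined with the fact that $\mathcal{R}$ is the union of its (possibly clipped) intersecting sectors really does ensure $\mathcal{R} \subseteq \bigcup_{d \in D'} d$; the rest is a direct transcription of Theorem~\ref{theorem-5} with the substitution $n = O(m^2)$.
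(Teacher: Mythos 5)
Your proposal is correct and follows essentially the same route as the paper: reduce to $k$-CDUDC by placing one representative point in each sector of the disk arrangement that meets $\mathcal{R}$ (using Observation~\ref{sector-lemma} to bound $|P'|$ by $O(m^2)$), then invoke Theorem~\ref{theorem-5} with $n = O(m^2)$. The paper's own proof is a one-line appeal to this substitution, with the construction described in the preceding paragraph; your explicit justification of the uniform-cover property of sectors is a welcome elaboration but not a different argument.
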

\begin{proof}
Follows from Theorem \ref{theorem-5} by applying $|P|=|P'|=O(m^2)$.
\end{proof}
\section{Conclusion}
In this paper, we have proposed constant-factor approximation algorithms for computing $k$-colorable unit disk covering of points, line segments, and a rectangular region. The algorithm is based on exhaustively searching an optimal $k$-colorable cover of points lying within a grid cell, which is a square of constant size. We believe that there seems to be no room for further improvement of approximation factor to smaller than 4 using the square grid approach. However, as future work, one could improve the almost brute-force algorithm's running time for a grid cell by exploiting various packing-constrained covering properties about grid square and unit disks. This will improve the running time of the overall algorithm.






\bibliographystyle{elsarticle-num-names}
\bibliography{sample.bib}







\end{document}